\documentclass[10pt]{article}

\title{\bfseries Bounds on relative modular Hamiltonians\texorpdfstring{\\}{ }in general QFT}
\date{}
\newcommand\documentkeywords{relative entropy, modular Hamiltonian}

\usepackage{paper}  
\graphicspath{{figures/}} 

\usepackage[
backend=biber,
style=alphabetic,
sorting=nyt,
hyperref=true,
backref=false,
backrefstyle=none
]{biblatex}
\usepackage{authblk}
\author[1a]{Adriano Chialastri}
\author[2b]{Christoph Minz}
\author[2c]{Ko Sanders}
\affil[1]{SISSA, Mathematics Area, Via Bonomea 265, 34136 Trieste, Italy}
\affil[a]{\href{mailto:achialas@sissa.it}{achialas@sissa.it}}
\affil[2]{Leibniz Universität Hannover, Institut für Analysis, Welfengarten 1,
30167 Hannover, Germany}
\affil[b]{\href{mailto:christoph.minz@math.uni-hannover.de}{christoph.minz@math.uni-hannover.de}}
\affil[c]{\href{mailto:ko.sanders@math.uni-hannover.de}{ko.sanders@math.uni-hannover.de}}

\begin{document}

\maketitle

\begin{abstract}
The relative entropy between two states is a key concept in quantum information theory and quantum field theory. In the setting of quantum field theory, its computation requires the handling of relative modular Hamiltonians, which have nice general properties, but are typically very difficult to compute explicitly. In this paper, we exploit locality properties of general algebraic QFTs to devise a scheme that allows us to estimate relative modular Hamiltonians $K_{\tilde{\omega},\omega}$ between two states, $\omega$ and $\tilde{\omega}$, and hence also their relative entropy $H(\omega,\tilde{\omega})$, in terms of the modular Hamiltonian of a reference state $\hat{\omega}$, which might be better understood. For suitable pairs of states $(\omega,\tilde{\omega})$ we can estimate the relative modular Hamiltonian for the algebra of a region $V_2$ from above, resp.~from below, in terms of the modular Hamiltonian of $\hat{\omega}$ on a larger region $V_3\supset V_2$, resp.~a smaller region $V_1\subset V_2$.

Pairs of states and choices of regions which are susceptible to our scheme are related to the presence of superluminal signalling in the sense of Sorkin's paradox. In particular, if we choose $\omega=\hat{\omega}$ as our reference state, then there exists a unitary operation that maps $\omega$ to $\tilde{\omega}$ on $V_3$ and that does not allow superluminal signalling from the spacelike complement $V_3'$ to $V_2$, if our upper bound applies. Similarly, if our lower bound applies, then there is a unitary that does not allow superluminal signalling from $V_1$ to $V_2'$.

To investigate the strength of our estimates we consider coherent states for CCR systems, focussing particularly on free scalar fields. Our estimates apply even if the relative modular Hamiltonian cannot be computed exactly. For sufficiently regular excitations we can make the difference between the upper and lower bounds arbitrarily small and recover an exact result by squeezing. Our method thus yields an independent proof for the relative entropy formula in cases where the relative modular Hamiltonian cannot be computed exactly. For massless fields we establish the analogous result also for double cone regions. These results indicate that our estimates do not lose too much information. 
\end{abstract}

\section{Introduction}\label{sec:introduction}

Relative entropy is a central quantity in quantum information theory and in quantum field theory (QFT),  which serves as a notion of distance between states. Intuitively speaking, we interpret the relative entropy \(H(\omega,\tilde{\omega})\) as the amount of information that we expect to gain when we learn that our system is in the state \(\omega\), where we previously thought it was in the state \(\tilde{\omega}\)~\cite{BaezFritz2014}. Working in the framework of algebraic QFT, relative entropy can be defined very generally through Tomita-Takesaki modular theory, where it is obtained as a suitable expectation value of the relative modular Hamiltonian~\cite{Araki1975,Araki1977}. While their generality and nice properties make these objects compelling from a mathematical perspective, their drawback is that they are mostly very hard to compute explicitly. This naturally leads to the question whether relative modular Hamiltonians and relative entropies can be estimated in terms of other quantities. E.g., Bekenstein has argued on physical grounds that entropy can be estimated from above in terms of energies and energy densities using the Bekenstein bound
\cite{Bekenstein1981}. More recently, relative entropies have been estimated in terms of energy densities of a spacetime region of interest \cite{longo2025bekenstein,hollands2025bekenstein}. In this paper, we will devise a scheme that allows us to estimate relative modular Hamiltonians, and hence also relative entropies, from above, resp.~from below, in terms of modular Hamiltonians on larger, resp.~smaller, regions. Our scheme combines locality properties of general QFTs with general properties of relative modular operators and suitable assumptions on the states involved.

Modular theory can be defined for pairs of states on a von Neumann algebra \cite{Araki1975,Araki1977,TakesakiII}, leading to relative modular Hamiltonians. Modular Hamiltonians arise as a special case, where both states are equal. This special case is more commonly presented in the literature, because it is easier and it can be generalised to the setting of standard subspaces of complex Hilbert spaces~\cite{longo2008real}. In QFT, whenever explicit expressions are known, they usually concern the modular Hamiltonian of a single state.

Unfortunately, explicit results for modular Hamiltonians in QFT are only known in a handful of cases. All these cases have in common that the modular flow generated by the modular Hamiltonian has a geometric action. Most famous is the Bisognano-Wichmann Theorem, which concerns the vacuum state restricted to a Rindler wedge in Minkowski space~\cite{BisognanoWichmann1976,BisognanoWichmann:1975}. This result can be generalised to curved spacetimes \cite{SEWELL1982201,sanders2015construction}. Further results concern massless free scalar fields in the Minkowski vacuum for double cones, cf.~\cite{HislopLongo1982} for the modular flow and \cite{LongoMorsella2023} for the modular Hamiltonian, and for the future lightcone~\cite{buchholz1978},~\cite[Thm.~V.4.2.2]{haag2012local}. For massless fermions, the modular Hamiltonian of a double cone was determined in \cite{la2025fermionic}.

It is of interest to see if it is possible to express or estimate relative modular Hamiltonians for some states $\omega$ and $\tilde{\omega}$ in terms of the modular Hamiltonian of some reference state $\hat{\omega}$, which might be better understood. Such relations are bound to depend on the operators necessary to link $\omega$ and $\tilde{\omega}$ to $\hat{\omega}$. In Section~\ref{sec:GeneralResults}, we will exploit the locality properties of general algebraic QFTs to describe our scheme, which,
for suitable pairs of states \(\omega\) and \(\tilde{\omega}\), allows us to estimate the relative modular Hamiltonian \(K_{\tilde{\omega},\omega}\) of some region \(V_2\), and hence also the relative entropy \(H(\omega,\tilde{\omega})\), from above, resp.~from below, in terms of the modular Hamiltonian of the reference state \(\hat{\omega}\) on a larger region \(V_3\supset V_2\), resp.~a smaller region \(V_1\subset V_2\).

To examine which pairs of states and choices of regions are susceptible to our scheme, it is interesting to note that its applicability is linked to Sorkin's paradox. In~\cite{Sorkin:1993}, Sorkin points out that it is possible to construct examples of superluminal signalling, so-called \enquote{impossible measurements}. In QFT, the typical setting involves three observers, Alice, Bob and Charlie, such that Alice and Charlie are localised in spacelike related regions. An impossible measurement arises if Bob can perform a unitary operation that allows Charlie to determine whether Alice has performed some unitary operation on the shared quantum system or not, apparently violating causality. In Section~\ref{ssec:applicationinQFT} we will adopt the point of view of \cite{borsten2021impossible,jubb2022causal}, where conditions on the unitaries localised in Bob's part of the system were formulated that ensure they are non-signalling, i.e., they do not allow superluminal signalling. We refer to~\cite{papageorgiou2024eliminating} for a comprehensive review of the current state of research on Sorkin's paradox.

Regarding our estimates, we observe that if we choose \(\omega=\hat{\omega}\) as our reference state, then there is a unitary operation that maps \(\omega\) to \(\tilde{\omega}\) on \(V_3\) and that does not allow superluminal signalling from the spacelike complement \(V_3'\) to \(V_2\) in the sense of Sorkin's paradox, if our upper bound applies. Similarly, if our lower bound applies, then there exists a unitary that maps \(\omega\) to \(\tilde{\omega}\) on \(V_2\) and that does not allow superluminal signalling from \(V_1\) to \(V_2'\). However, it is an open question whether the existence of a non-signalling unitary (in either direction) is a sufficient condition to show the applicability of our estimate (in the corresponding direction). We comment on this issue using a construction involving the Cuntz algebra.

To investigate the strength of our estimates we consider applications to coherent states, first on general CCR Weyl algebras and then for free scalar fields in the Minkowski vacuum representation. In general, coherent states in a quasi-free representation are defined in terms of Weyl operators, which are always non-signalling unitaries. For a standard reference state, the coherent states are also standard, so that relative modular Hamiltonians are always well-defined. One may often express the relative modular Hamiltonian between two coherent states, and hence also their relative entropy, in terms of the modular Hamiltonian of the quasi-free reference state. However, for coherent states whose excitations have a non-zero difference at the boundary of the region the exact computation becomes non-trivial. This is why our estimates are of some interest even in this simple case.

It is known that the relative entropy between all pairs of coherent states can be obtained by approximation arguments using its lower semicontinuity \cite{Bostelmann2022,Longo:2019,Ciolli2020}. (This argument does not directly apply to the relative modular Hamiltonian.) For the special case of free scalar fields on a Rindler wedge or, in the massless case, on a double cone we show that an exact formula for the relative entropy can also be recovered from our estimates using a squeezing argument involving an appropriate choice of cutoff functions. This proof works also for excitations that do not vanish at the boundary and it differs from existing proofs, because it does not exploit the lower semicontinuity of the relative entropy. This result indicates that our general estimates may not be losing too much information.

We have organised our paper as follows. In Section~\ref{sec:GeneralResults} we review some generalities about relative modular theory and relative entropy in QFT, establish our upper and lower bounds and discuss the relation between the applicability of these estimates to Sorkin's paradox. Section~\ref{sec:coherentstates} applies these results to coherent states in general Weyl CCR algebras. Section~\ref{sec:coherentscalarQFT} will further specialise the setting to coherent excitations of the Minkowski vacuum of a free scalar field in order to present further results on relative modular Hamiltonians and to show that our upper and lower bounds for the relative entropy on a Rindler wedge or on a double cone (in the massless case) can be chosen arbitrarily close to the exact result. Section~\ref{sec:conclusions} contains our conclusions and a brief outlook.

\section{Relative modular theory and relative entropy in QFT}\label{sec:GeneralResults}

In this section we review some facts about relative modular theory and relative entropy in QFT.  We refer the reader to \cite{TakesakiI,TakesakiII,KadisonRingroseI,KadisonRingroseII} for background material on operator algebras and to \cite{HollandsKS2018} for more details on relative entropy in QFT and its relation to entanglement entropy.

\subsection{Generalities on relative modular Hamiltonians and relative entropy}\label{sec:ModularGeneralities}

Given a $C^{\star}$-algebra $\Acal$ with unit $\one$ and a state $\omega$ (i.e. a normalised positive linear functional $\omega:\Acal\to\Cbb$), its GNS-representation is characterised by the GNS-triple $(\pi_{\omega},\Hcal_{\omega},\Omega_{\omega})$ and we define the von Neumann algebra $\Rcal_{\omega} \coloneq \pi_{\omega}(\Acal)''$. We call $\Omega_{\omega}$ a standard vector for $\Rcal_{\omega}$ iff it is both separating and cyclic, i.e., $\Rcal_{\omega}\ni a\mapsto a\Omega_{\omega}\subset\Hcal_{\omega}$ is injective and has a dense range.

Now, let $\tilde{\omega}$ be another state on $\Acal$ with GNS-triple $(\pi_{\tilde{\omega}},\Hcal_{\tilde{\omega}},\Omega_{\tilde{\omega}})$.
The states $\omega$ and $\tilde{\omega}$ are called quasi-equivalent iff the von Neumann algebras $\Rcal_{\omega}$ and $\Rcal_{\tilde{\omega}}$ are isomorphic.
If in addition both $\Omega_{\omega}$ and $\Omega_{\tilde{\omega}}$ are standard vectors, then the Unitary Implementation Theorem \cite[Thm.~7.2.9]{KadisonRingroseII} states that we can identify $\Hcal \coloneq \Hcal_{\tilde{\omega}} = \Hcal_{\omega}$ and $\pi_{\tilde{\omega}}=\pi_{\omega}$, so the vectors $\Omega_{\omega}$ and $\Omega_{\tilde{\omega}}$ lie in the same Hilbert space. This is the situation that we will now consider and for simplicity we will denote the vectors by $\Omega$ and $\tilde{\Omega}$ and drop the subscripts $\omega$ and $\tilde{\omega}$.

Given two standard unit vectors $\Omega,\tilde{\Omega}\in\Hcal$ for the von Neumann algebra $\Rcal$ one defines the relative Tomita operator $S_{\tilde{\Omega},\Omega}$ by
\begin{align}
S_{\tilde{\Omega},\Omega}a\Omega&\coloneq a^*\tilde{\Omega}\label{eqn:defTomita}
\end{align}
for all $a\in\Rcal$. This operator is densely defined with
\begin{align}
S_{\tilde{\Omega},\Omega}^{-1}&=S_{\Omega,\tilde{\Omega}}
\eqend{.}\label{eqn:Sinverse}
\end{align}
We can analogously define the operator 
$S'_{\tilde{\Omega},\Omega}$ w.r.t.~$\Rcal'$, for which $\Omega$ and $\Omega'$ are also standard. An elementary computation then shows that 
$S_{\tilde{\Omega},\Omega}^*\supset S'_{\tilde{\Omega},\Omega}$. In particular, 
$S_{\tilde{\Omega},\Omega}^*$ is densely defined and hence $S_{\tilde{\Omega},\Omega}$ is closable. We will denote the closure by the same symbol and we note that \eqref{eqn:Sinverse} remains valid.

The polar decomposition
\begin{align}
S_{\tilde{\Omega},\Omega}&=J_{\tilde{\Omega},\Omega}\Delta_{\tilde{\Omega},\Omega}^{\frac12}\notag
\end{align}
uniquely defines the relative modular operator $\Delta_{\tilde{\Omega},\Omega}$, which is strictly positive, and the relative modular conjugation $J_{\tilde{\Omega},\Omega}$, which is anti-unitary, $J_{\tilde{\Omega},\Omega}^*=J_{\tilde{\Omega},\Omega}^{-1}$. The relative modular Hamiltonian is the self-adjoint operator
\begin{align}
K_{\tilde{\Omega},\Omega}&\coloneq -\log\left(\Delta_{\tilde{\Omega},\Omega}\right)\eqend{.}\label{def:relK}
\end{align}

From \eqref{eqn:Sinverse} we find $S_{\Omega,\tilde{\Omega}}=J_{\tilde{\Omega},\Omega}^*(J_{\tilde{\Omega},\Omega}\Delta_{\tilde{\Omega},\Omega}^{-\frac12}J_{\tilde{\Omega},\Omega}^*)$
so by the uniqueness of the polar decomposition we have
\begin{align}
J_{\Omega,\tilde{\Omega}}&=J_{\tilde{\Omega},\Omega}^*
\notag\\
\Delta_{\Omega,\tilde{\Omega}}^{\frac12}&=
J_{\tilde{\Omega},\Omega}
\Delta_{\tilde{\Omega},\Omega}^{-\frac12}
J_{\tilde{\Omega},\Omega}^*\notag\\
K_{\Omega,\tilde{\Omega}}&=-
J_{\tilde{\Omega},\Omega}
K_{\tilde{\Omega},\Omega}
J_{\tilde{\Omega},\Omega}^*\eqend{.}\notag
\end{align}
For the adjoint we then have $S_{\tilde{\Omega},\Omega}^*=J_{\Omega,\tilde{\Omega}}\Delta_{\Omega,\tilde{\Omega}}^{-\frac12}$.

The following lemma is sometimes useful, cf.
\cite[Lemma 5.1]{CasiniGrilloPontello:2019}.
\begin{lemma}\label{lem:UModular}
If a unitary $u$ acting on $\Hcal$ satisfies $u\Rcal u^*=\Rcal$, then $u\Omega$ and $u\tilde{\Omega}$ are also standard and we have
\begin{align}
J_{u\tilde{\Omega},u\Omega}&=uJ_{\tilde{\Omega},\Omega}u^*\notag\\
K_{u\tilde{\Omega},u\Omega}&=uK_{\tilde{\Omega},\Omega}u^*\eqend{.}\notag
\end{align}
\end{lemma}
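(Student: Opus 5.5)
The plan is to first establish standardness of $u\Omega$ and $u\tilde{\Omega}$, then show that conjugation by $u$ intertwines the relative Tomita operators, and finally transport the polar decomposition and the functional calculus.

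\emph{Standardness.} Since $u\Rcal u^*=\Rcal$, the map $a\mapsto uau^*$ is a bijection of $\Rcal$ onto itself. Hence
\[
\Rcal u\Omega = u(u^*\Rcal u)\Omega = u\Rcal\Omega ,
\]
which is dense because $u$ is unitary and $\Rcal\Omega$ is dense. So $u\Omega$ is cyclic. For separation, suppose $a u\Omega=0$ with $a\in\Rcal$. Then $(u^*au)\Omega=0$, and $u^*au\in\Rcal$, so $u^*au=0$ and therefore $a=0$. The same argument applies to $\tilde{\Omega}$.

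\emph{Intertwining of the Tomita operators.} For $a\in\Rcal$, write $a=ubu^*$ with $b=u^*au\in\Rcal$. Then
\[
S_{u\tilde{\Omega},u\Omega}\,a u\Omega = a^*u\tilde{\Omega} = u b^*\tilde{\Omega} = u S_{\tilde{\Omega},\Omega} b\Omega = uS_{\tilde{\Omega},\Omega}u^*\,(a u\Omega).
\]
As $a$ ranges over $\Rcal$, so does $b$. The core $\Rcal u\Omega$ of the left-hand side is therefore exactly $u(\Rcal\Omega)$, which is $u$ applied to the core of $S_{\tilde{\Omega},\Omega}$. So the two pre-closure operators coincide. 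Since conjugation by a unitary commutes with taking closures, we obtain
\[
S_{u\tilde{\Omega},u\Omega}=uS_{\tilde{\Omega},\Omega}u^*.
\]

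\emph{Polar decomposition and functional calculus.} We have
\[
uS_{\tilde{\Omega},\Omega}u^* = (uJ_{\tilde{\Omega},\Omega}u^*)\,(u\Delta_{\tilde{\Omega},\Omega}^{\frac12}u^*).
\]
Here $uJ_{\tilde{\Omega},\Omega}u^*$ is antiunitary, and $u\Delta_{\tilde{\Omega},\Omega}^{\frac12}u^*$ is strictly positive and self-adjoint on the domain $u\,\mathrm{Dom}(\Delta_{\tilde{\Omega},\Omega}^{\frac12})$. By uniqueness of the polar decomposition, this gives
\[
J_{u\tilde{\Omega},u\Omega}=uJ_{\tilde{\Omega},\Omega}u^*,\qquad \Delta_{u\tilde{\Omega},u\Omega}^{\frac12}=u\Delta_{\tilde{\Omega},\Omega}^{\frac12}u^*.
\]
Unitary conjugation is compatible with the Borel functional calculus, that is, $f(uAu^*)=uf(A)u^*$. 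Squaring and applying $-\log$ then yields $K_{u\tilde{\Omega},u\Omega}=uK_{\tilde{\Omega},\Omega}u^*$ as self-adjoint operators, including equality of domains.

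The only step needing care is the domain and closure bookkeeping in the intertwining step. There one must check that the cores match exactly, rather than only that the operators agree on a common subspace. This is exactly what the bijectivity of $a\mapsto u^*au$ on $\Rcal$ provides.
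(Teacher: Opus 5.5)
Your proof is correct and follows the paper's argument essentially step by step: cyclicity from $\Rcal u\Omega=u\Rcal\Omega$, then the identity $S_{u\tilde{\Omega},u\Omega}=uS_{\tilde{\Omega},\Omega}u^*$ checked on the common core $u\Rcal\Omega$, then uniqueness of the polar decomposition and the functional calculus. The only inessential difference is that you prove the separating property directly via $u^*au\in\Rcal$, whereas the paper derives it from cyclicity for $\Rcal'$ using $u\Rcal'u^*=\Rcal'$.
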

By functional calculus it then follows that $\Delta_{u\tilde{\Omega},u\Omega}=u\Delta_{\tilde{\Omega},\Omega}u^*$ and 
$S_{u\tilde{\Omega},u\Omega}=uS_{\tilde{\Omega},\Omega}u^*$.
\begin{proof}
As $\Rcal u\Omega=u\Rcal\Omega$,  $u\Omega$ is cyclic for $\Rcal$. Using $u\Rcal'u^*=\Rcal'$ we see from the same argument that $u\Omega$ is also cyclic for $\Rcal'$ and hence separating and standard for $\Rcal$. The same applies to $u\tilde{\Omega}$. Now $\Rcal u\Omega=u\Rcal\Omega$ is a core for both $S_{u\tilde{\Omega},u\Omega}$ and $uS_{\tilde{\Omega},\Omega}u^*$ and for all $a\in\Rcal$ we have
\begin{align}
uS_{\tilde{\Omega},\Omega}u^*au\Omega&=
u(u^*au)^*\tilde{\Omega}
=a^*u\tilde{\Omega}
=S_{u\tilde{\Omega},u\Omega}au\Omega\eqend{,}\notag
\end{align}
so $S_{u\tilde{\Omega},u\Omega}=uS_{\tilde{\Omega},\Omega}u^*=(uJ_{\tilde{\Omega},\Omega}u^*)(u\Delta^{\frac12}_{\tilde{\Omega},\Omega}u^*)$, from which we infer all the desired relations using the uniqueness of the polar decomposition.
\end{proof}

When $\tilde{\Omega}=\Omega$ we drop the adjective \enquote{relative} and we use the single subscript $\Omega$, so the Tomita operator $S_{\Omega}$, the modular operator $\Delta_{\Omega}$, the modular conjugation $J_{\Omega}$ and the modular Hamiltonian $K_{\Omega}$ satisfy the relations $S_{\Omega}=J_{\Omega}\Delta_{\Omega}^{\frac12}$ and $K_{\Omega}=-\log(\Delta_{\Omega})$ with
\begin{align}
J_{\Omega}&=J_{\Omega}^*=J_{\Omega}^{-1}\notag\\
K_{\Omega}&=-J_{\Omega}K_{\Omega}J_{\Omega}\eqend{.}\notag
\end{align}

Following \cite{Araki1975,Araki1977}, the relative entropy between the states $\Omega$ and $\tilde{\Omega}$ for the algebra $\Rcal$ is defined by\footnote{Unfortunately the corresponding Equation (91) in \cite{HollandsKS2018} suffers from a sign error and swapped subscripts.}
\begin{align}
H(\Omega,\tilde{\Omega})&\coloneq 
\langle\Omega,K_{\tilde{\Omega},\Omega}\Omega\rangle
=-\langle\Omega,
\log \,\!(\Delta_{\tilde{\Omega},\Omega})\Omega\rangle\eqend{.}\label{eqn:defRE}
\end{align}
We interpret this as the amount of information that we expect to gain when we learn that our system is in the state $\Omega$, where we previously thought it was in the state $\tilde{\Omega}$ \cite{BaezFritz2014}. Note that $H(\Omega,\tilde{\Omega})\not=H(\tilde{\Omega},\Omega)$ in general.

The following standard example expresses the relative entropy in terms of density matrices.
\begin{example}
Let $\Hcal_1$ be a separable complex Hilbert space and $\rho,\tilde{\rho}$ two density matrices, i.e. positive operators with trace $1$. We let $\Hcal$ be the Hilbert space of Hilbert-Schmidt operators on $\Hcal_1$, endowed with the Hilbert-Schmidt inner product $\langle a,b\rangle \coloneq \tr_{\Hcal_1}a^*b$.
We will write $\Omega \coloneq \sqrt{\rho}$ and $\tilde{\Omega} \coloneq \sqrt{\tilde{\rho}}$ as vectors in $\Hcal$.
We may identify $\Hcal\simeq\Hcal_1\otimes\Hcal_1^*$ in a natural way, where $\Hcal_1^*$ is the dual space of $\Hcal_1$.
E.g., if $\{v_n\}$ is an orthonormal eigenbasis for $\rho$ with eigenvalues $p_n\ge0$, then $\Omega=\sum_n\sqrt{p_n}v_n\otimes v_n^*$.

We let $\Rcal=\Bcal(\Hcal_1)$ the algebra of all bounded operators on $\Hcal_1$. This algebra acts on the first factor of $\Hcal_1\otimes\Hcal_1^*$. The commutant $\Rcal'=\Bcal(\Hcal_1^*)$ acts on the second factor. If $\rho$ and $\tilde{\rho}$ are strictly positive, the vectors $\Omega$ and $\tilde{\Omega}$ are cyclic and separating for $\Rcal$ and we can define the relative Tomita operator. We will verify that the relative modular operator is given by
\begin{align}
\Delta_{\tilde{\Omega},\Omega}&=
\tilde{\rho}\otimes\rho^{-1}\notag
\end{align}
acting on $\Hcal_1\otimes\Hcal_1^*$ by
$\Delta_{\tilde{\Omega},\Omega}v\otimes w^*=(\tilde{\rho}v)\otimes(\rho^{-1}w)^*=\tilde{\rho}(v\otimes w^*)\rho^{-1}$. Indeed, for all $a,b\in\Rcal$ we have
\begin{align*}
    \bigl\langle
      a \Omega,
      \Delta_{\tilde{\Omega}, \Omega} b\Omega
    \bigr\rangle
  &= \bigl\langle
      S_{\tilde{\Omega},\Omega} b\Omega,
      S_{\tilde{\Omega},\Omega} a \Omega
    \bigr\rangle
\nexteq
  &= \bigl\langle b^* \tilde{\Omega}, a^*\tilde{\Omega} \bigr\rangle
\nexteq
  &= \tr_{\Hcal_1}\bigl( \tilde{\rho} ba^* \bigr)
\nexteq
  &= \tr_{\Hcal_1}\Bigl(
      \rho
      \bigl( \sqrt{\tilde{\rho}} a \rho^{-\frac12} \bigr)^*
      \bigl( \sqrt{\tilde{\rho}} b \rho^{-\frac12} \bigr)
    \Bigr)
\nexteq
  &= \Bigl\langle
      \sqrt{\tilde{\rho}} \otimes \rho^{-\frac12} a \Omega,\sqrt{\tilde{\rho}} \otimes \rho^{-\frac12} b \Omega
    \Bigr\rangle
\nexteq
  &= \bigl\langle a \Omega, \tilde{\rho} \otimes \rho^{-1} b \Omega \bigr\rangle
  \eqend{,}
\end{align*}
where the first and last equalities hold in the sense of quadratic forms. From this we conclude that
\begin{align*}
    K_{\tilde{\Omega}, \Omega}
  &= -\log(\tilde{\rho}) \otimes \one
    + \one \otimes \log(\rho)
\end{align*}
acting on $\Hcal_1\otimes\Hcal_1^*$ and hence
\begin{align*}
    H(\Omega,\tilde{\Omega})
  &= \langle \Omega, K_{\tilde{\Omega}, \Omega} \Omega \rangle
\nexteq
  &= \tr_{\Hcal_1}\Bigl(
      \sqrt{\rho}
      \bigl[
        -\log(\tilde{\rho}) \sqrt{\rho}
        + \sqrt{\rho} \log(\rho)
      \bigr]
    \Bigr)
\nexteq
  &= \tr_{\Hcal_1}\Bigl( \rho \bigl[ \log(\rho) - \log(\tilde{\rho}) \bigr] \Bigr)
  \eqend{.}
\end{align*}
\end{example}

Now suppose that $\hat{\Omega}$ is another standard unit vector for $\Rcal$ which gives the same expectation values as $\Omega$. Then there is a unitary $u'\in\Rcal'$ given by
\begin{align}
u'a\Omega&\coloneq a\hat{\Omega}\label{def:u'}
\end{align}
for all $a\in\Rcal$. 
Because both vectors define the same state on $\Rcal$, one expects that
$H(\hat{\Omega},\tilde{\Omega})=H(\Omega,\tilde{\Omega})$ and $H(\tilde{\Omega},\hat{\Omega})=H(\tilde{\Omega},\Omega)$. We will now explain why this does indeed work out, taking a slightly more general perspective.
\begin{lemma}\label{lem:uu'Modular}
Suppose that $u,v\in\Rcal$ 
and $u',v'\in\Rcal'$ are unitaries. Then
\begin{subequations}
\begin{align}
    J_{u u' \tilde{\Omega}, v v' \Omega} 
  &= u' v J_{\tilde{\Omega}, \Omega} u^* v'^* \notag\\
    K_{u u' \tilde{\Omega}, v v' \Omega} 
  &= v' u K_{\tilde{\Omega}, \Omega} u^* v'^*
\label{eqn:uu'vv'K}\\
H(vv'\Omega,uu'\tilde{\Omega})
&=\langle u^*v\Omega,K_{\tilde{\Omega},\Omega}u^*v\Omega\rangle\eqend{.}\label{eqn:REuu'}
\end{align}
\end{subequations}
\end{lemma}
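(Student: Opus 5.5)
The plan is to compute the relative Tomita operator of the transformed pair directly on a core and then read off $J$ and $K$ from the uniqueness of the polar decomposition, as in the proof of Lemma~\ref{lem:UModular}. First I would record that $vv'\Omega$ and $uu'\tilde{\Omega}$ are again standard unit vectors for $\Rcal$. For cyclicity, $\Rcal vv'\Omega=v'\Rcal v\Omega=v'\Rcal\Omega$, since $v'$ commutes with $\Rcal$ and $\Rcal v=\Rcal$; this set is dense because $v'$ is unitary. Applying the same argument to $\Rcal'$, with the roles of $v$ and $v'$ exchanged, gives the separating property, and likewise for $uu'\tilde{\Omega}$. So all the relative modular objects in the statement are well defined.

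Next, for $a\in\Rcal$ I put $b\coloneq av\in\Rcal$, so that $a\,vv'\Omega=v'b\Omega$ and $a^*=vb^*$. Then
\begin{align}
S_{uu'\tilde{\Omega},vv'\Omega}\,v'b\Omega
&=vb^*uu'\tilde{\Omega}
=vu'(u^*b)^*\tilde{\Omega}
=vu'S_{\tilde{\Omega},\Omega}u^*b\Omega
=(vu')\,S_{\tilde{\Omega},\Omega}\,(u^*v'^*)\,v'b\Omega\eqend{.}\notag
\end{align}
As $a$ ranges over $\Rcal$, so does $b$. Hence the two closed operators $S_{uu'\tilde{\Omega},vv'\Omega}$ and $(vu')S_{\tilde{\Omega},\Omega}(u^*v'^*)$ agree on $v'\Rcal\Omega$. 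This set is a core for the first operator by definition. Its image under $u^*v'^*$ is $u^*\Rcal\Omega=\Rcal\Omega$, which is a core for $S_{\tilde{\Omega},\Omega}$, so it is also a core for the second operator. Therefore the two operators coincide after closure. I expect this step to be the only delicate one: one must check carefully that the cores match, using that multiplication by unitaries maps cores of closed operators to cores.

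Writing $W\coloneq v'u$, which is unitary with $W^*=u^*v'^*$, I then get
\begin{align}
S_{uu'\tilde{\Omega},vv'\Omega}=\bigl(vu'J_{\tilde{\Omega},\Omega}W^*\bigr)\bigl(W\Delta^{\frac12}_{\tilde{\Omega},\Omega}W^*\bigr)\eqend{.}\notag
\end{align}
Here the first factor is anti-unitary and the second is strictly positive. By uniqueness of the polar decomposition, $J_{uu'\tilde{\Omega},vv'\Omega}=vu'J_{\tilde{\Omega},\Omega}u^*v'^*$, which equals $u'vJ_{\tilde{\Omega},\Omega}u^*v'^*$ because $u'$ and $v$ commute. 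Likewise $\Delta^{\frac12}_{uu'\tilde{\Omega},vv'\Omega}=W\Delta^{\frac12}_{\tilde{\Omega},\Omega}W^*$, and functional calculus gives \eqref{eqn:uu'vv'K}.

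Finally, for \eqref{eqn:REuu'} I insert \eqref{eqn:uu'vv'K} into the definition \eqref{eqn:defRE}:
\begin{align}
H(vv'\Omega,uu'\tilde{\Omega})=\langle vv'\Omega,v'uK_{\tilde{\Omega},\Omega}u^*v'^*vv'\Omega\rangle\eqend{.}\notag
\end{align}
Using $v'^*v=vv'^*$ to cancel $v'$ on both sides yields $\langle u^*v\Omega,K_{\tilde{\Omega},\Omega}u^*v\Omega\rangle$. This is understood in the quadratic form sense (possibly $+\infty$), which is preserved under unitary conjugation.
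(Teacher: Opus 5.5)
Your proposal is correct and follows essentially the paper's route: compute the relative Tomita operator on the core $\Rcal\,vv'\Omega=v'\Rcal\Omega$, then read off $J$ and $K$ from uniqueness of the polar decomposition. The only difference is bookkeeping. The paper first proves the case $v=v'=\one$ and then gets the general case by applying Lemma~\ref{lem:UModular} to the unitary $vv'$, whereas your substitution $b=av$ handles all four unitaries in a single computation.
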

\proof
We will first prove the first two identities when $v=v'=\one$. Because $uu'\Rcal(uu')^*=u\Rcal u^*=\Rcal$ we see from Lemma \ref{lem:UModular} that $uu'\tilde{\Omega}$ is again standard for $\Rcal$. For all $a\in\Rcal$ we have
\begin{align}
S_{uu'\tilde{\Omega},\Omega}a\Omega&=a^*uu'\tilde{\Omega}\notag\\
&=u'(u^*a)^*\tilde{\Omega}\notag\\
&=u'S_{\tilde{\Omega},\Omega}u^*a\Omega\eqend{.}\notag
\end{align}
As $\Rcal\Omega$ is a core for both
$S_{uu'\tilde{\Omega},\Omega}$ and $u'S_{\tilde{\Omega},\Omega}u^*$ we find $S_{uu'\tilde{\Omega},\Omega}=u'S_{\tilde{\Omega},\Omega}u^*$. Using $\Delta_{\tilde{\Omega},\Omega}=S_{\tilde{\Omega},\Omega}^*S_{\tilde{\Omega},\Omega}$,
$J_{\tilde{\Omega},\Omega}=S_{\tilde{\Omega},\Omega}\Delta_{\tilde{\Omega},\Omega}^{-\frac12}$, and the functional calculus, this yields the first two identities when $v=v'=\one$. The general case then follows from Lemma \ref{lem:UModular} for the unitary $vv'$ and a change of notation. For the final identity we compute
\begin{align*}
    H(v v' \Omega, u u' \tilde{\Omega})
  &= \innerProd{v v' \Omega}{v' u K_{\tilde{\Omega}, \Omega}u^* v'^* v v' \Omega}
\nexteq
  &= \innerProd{u^* v \Omega}{K_{\tilde{\Omega}, \Omega} u^* v \Omega}
  \eqend{.}
  \qquad\qed
\end{align*}
Note that $u',v'$ cancel out in \eqref{eqn:REuu'} as anticipated above, so in particular 
$H(v'\Omega,u'\tilde{\Omega})=H(\Omega,\tilde{\Omega})$. If $\Rcal$ is a type III$_1$ von Neumann factor, a theorem of Connes and St{\o}rmer~\cite{Connes1978} tells us that the set of states
$\{\tilde{\Omega}=u'u\hat{\Omega}\mid u\in\Rcal, u'\in\Rcal'\}$ is dense in the unit ball of $\Hcal$ when $\Omega$ is standard for $\Rcal$. In the case where $\tilde{\Omega}=\Omega$ we find for the relative entropy
\begin{align}
\label{eqn:REuu'2}
    H(vv'\Omega,uu'\Omega)
  &= \innerProd{u^*v \Omega}{K_{\Omega} u^*v \Omega}
  \eqend{.}
\end{align}

\begin{remark}
In our applications to QFT, we may start with algebraic states $\omega,\tilde{\omega}$ on a $C^{\star}$-algebra $\Acal$ and write $H(\tilde{\omega},\omega)$ instead of $H(\tilde{\Omega},\Omega)$, where we assume that both states can be represented by standard vectors $\Omega$, resp.~$\tilde{\Omega}$, in the GNS-representation space of $\omega$ and we take $\Rcal=\pi_{\omega}(\Acal)''$. We can see from \eqref{eqn:REuu'} that the choice of the representing vectors is irrelevant for the computation of the relative entropy, because the unitary $u'\in\Rcal'$ defined as in \eqref{def:u'} cancels out.    
\end{remark}

In the following lemma we consider two von Neumann algebras, $\Rcal_1$ and $\Rcal_2$, and we distinguish the corresponding modular data by the same label.
\begin{lemma}\label{lem:ModularEstimate}
Let $\Rcal_1\subset\Rcal_2$ be an inclusion of von Neumann algebras acting on $\Hcal$ and $\Omega$, $\tilde{\Omega}$ standard vectors for both $\Rcal_i$. Then
\begin{subequations}
\begin{align}
K^{(1)}_{\tilde{\Omega},\Omega}&\le
K^{(2)}_{\tilde{\Omega},\Omega}\label{eqn:RMHestimate}\\
H^{(1)}(\Omega,\tilde{\Omega})&\le
H^{(2)}(\Omega,\tilde{\Omega})\eqend{.}\label{eqn:REestimate}
\end{align}
\end{subequations}
\end{lemma}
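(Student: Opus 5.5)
The plan is to compare the relative modular operators first and then pass to their logarithms using operator monotonicity of $\log$. The key observation is that the relative Tomita operators are nested. Indeed, $\Rcal_1\Omega\subset\Rcal_2\Omega$, and for $a\in\Rcal_1$ both $S^{(1)}_{\tilde{\Omega},\Omega}$ and $S^{(2)}_{\tilde{\Omega},\Omega}$ map $a\Omega\mapsto a^*\tilde{\Omega}$. After taking closures this gives $S^{(1)}_{\tilde{\Omega},\Omega}\subset S^{(2)}_{\tilde{\Omega},\Omega}$. Since $\Delta^{(i)}_{\tilde{\Omega},\Omega}=S^{(i)*}_{\tilde{\Omega},\Omega}S^{(i)}_{\tilde{\Omega},\Omega}$, we have $\mathrm{dom}\,(\Delta^{(i)}_{\tilde{\Omega},\Omega})^{\frac12}=\mathrm{dom}\,S^{(i)}_{\tilde{\Omega},\Omega}$ and $\|(\Delta^{(i)}_{\tilde{\Omega},\Omega})^{\frac12}\psi\|=\|S^{(i)}_{\tilde{\Omega},\Omega}\psi\|$. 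Hence the form domain of $\Delta^{(1)}$ is contained in that of $\Delta^{(2)}$, and the two forms agree on the smaller domain. In the standard sense of ordering of positive self-adjoint operators via quadratic forms, this means
\begin{align}
\Delta^{(2)}_{\tilde{\Omega},\Omega}\le\Delta^{(1)}_{\tilde{\Omega},\Omega}\eqend{.}\notag
\end{align}

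Next I would transfer this to the logarithms, which yields $K^{(1)}\le K^{(2)}$ after the sign flip in \eqref{def:relK}. For positive self-adjoint $A\le B$ in the form sense, the resolvents satisfy $(B+t)^{-1}\le(A+t)^{-1}$ for all $t>0$; this is a standard fact, provable for instance via the variational characterisation of $\langle\psi,(A+t)^{-1}\psi\rangle$. I would then use the integral representation
\begin{align}
\log\lambda=\int_0^\infty\left(\frac{1}{1+t}-\frac{1}{\lambda+t}\right)dt\notag
\end{align}
to define the comparison. Precisely, I would interpret \eqref{eqn:RMHestimate} as the statement that for every $\psi$ the expectation values
\begin{align}
\langle\psi,-\log(\Delta^{(i)})\psi\rangle=\int_0^\infty\left(\langle\psi,(\Delta^{(i)}+t)^{-1}\psi\rangle-\frac{\|\psi\|^2}{1+t}\right)dt\notag
\end{align}
are ordered. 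The right-hand side is defined in $(-\infty,\infty]$ whenever $\psi$ lies in the form domain of the positive part of $\log\Delta$. The inequality then follows pointwise in $t$ from the resolvent inequality, together with monotone convergence applied separately to the positive and negative parts of the integrand.

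The main obstacle is making this meaningful for unbounded operators $K^{(i)}$ that are not semibounded. The form domains of $K^{(1)}$ and $K^{(2)}$ need not coincide, so I would state the inequality on the quadratic forms extended with values in $(-\infty,+\infty]$. The delicate point is to show that the negative part never causes problems. Using the spectral measure $\mu_i$ of $\Delta^{(i)}$ in a vector $\psi\in\mathrm{dom}\,S^{(1)}$, one has $\int_{(1,\infty)}\log\lambda\,d\mu_i\le\int\lambda\,d\mu_i=\|S^{(i)}\psi\|^2<\infty$. So $\langle\psi,K^{(i)}\psi\rangle$ is well defined in $(-\infty,\infty]$, and the $t$-integral converges appropriately.

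Finally, \eqref{eqn:REestimate} follows by taking $\psi=\Omega$ in the quadratic form inequality. Note that $\Omega\in\Rcal_1\Omega\subset\mathrm{dom}\,S^{(1)}$ with $\|S^{(i)}\Omega\|=\|\tilde{\Omega}\|=1$. The bound from the previous paragraph therefore applies, so both $H^{(i)}(\Omega,\tilde{\Omega})$ are well-defined elements of $(-\infty,\infty]$ (in fact $\ge 0$), and their ordering is exactly the expectation value of \eqref{eqn:RMHestimate} in $\Omega$.
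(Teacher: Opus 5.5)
Your proof is correct and follows the same route as the paper. The extension $S^{(1)}_{\tilde{\Omega},\Omega}\subset S^{(2)}_{\tilde{\Omega},\Omega}$ gives $\Delta^{(2)}_{\tilde{\Omega},\Omega}\le\Delta^{(1)}_{\tilde{\Omega},\Omega}$, and operator monotonicity of $\log$ then finishes the argument; the paper reaches the equivalent $(\Delta^{(1)}_{\tilde{\Omega},\Omega})^{-1}\le(\Delta^{(2)}_{\tilde{\Omega},\Omega})^{-1}$ via an auxiliary isometry and simply cites monotonicity of $\log$, whereas you read the ordering directly off the form domains and prove the monotonicity step via resolvents and the integral representation of $\log$, with $\log\lambda\le\lambda$ on $\mathrm{dom}\,S^{(1)}_{\tilde{\Omega},\Omega}$ making the inequality for the non-semibounded $K^{(i)}_{\tilde{\Omega},\Omega}$ precise as $(-\infty,\infty]$-valued forms, which the paper leaves implicit.
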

\begin{proof}
The Tomita operator $S^{(2)}_{\tilde{\Omega},\Omega}$ for $\Rcal_2$ extends the Tomita operator $S^{(1)}_{\tilde{\Omega},\Omega}$ for $\Rcal_1$, so for any $a\in\Rcal_1$ we have 
\begin{align*}
    \left\|
      \left( \Delta^{(2)}_{\tilde{\Omega}, \Omega} \right)^{\frac12}
      a \Omega
    \right\|
  &= \left\| S^{(2)}_{\tilde{\Omega}, \Omega} a \Omega \right\|
  = \| a^* \tilde{\Omega} \|
  = \left\|
      \left( \Delta^{(1)}_{\tilde{\Omega}, \Omega} \right)^{\frac12}
      a \Omega
    \right\|
  \eqend{.}
\end{align*}
Now, $\bigl( \Delta^{(1)}_{\tilde{\Omega}, \Omega} \bigr)^{1/2}$ has a dense range on the core $\Rcal_1 \Omega$, so on this range we may define the operator
\begin{align*}
    C
  &\coloneq \left( \Delta^{(2)}_{\tilde{\Omega}, \Omega} \right)^{\frac12}
    \left( \Delta^{(1)}_{\tilde{\Omega}, \Omega} \right)^{-\frac12}
  \eqend{,}
\end{align*}
which is an isometry by the above computation, $C^*C=\one$. In particular, $\|C^*\| = \|C\| = 1$, so we find $C C^* \leq \one$, where
$C^* a \Omega = \bigl( \Delta^{(1)}_{\tilde{\Omega}, \Omega} \bigr)^{-1/2} \bigl( \Delta^{(2)}_{\tilde{\Omega}, \Omega} \bigr)^{1/2} a \Omega$ for all $a \in \Rcal_2$.
This entails
\begin{align*}
    0
  &< \left( \Delta^{(1)}_{\tilde{\Omega}, \Omega} \right)^{-1}
  \leq \left( \Delta^{(2)}_{\tilde{\Omega}, \Omega} \right)^{-1}
\end{align*}
on the domain of $\bigl( \Delta^{(2)}_{\tilde{\Omega}, \Omega} \bigr)^{-1/2}$.
Since the $\log$ function is operator monotone (cf. \cite[Thm.~4.1]{Simon2019MatrixMonotone}), the result follows from the definitions \eqref{def:relK} and \eqref{eqn:defRE}.
\end{proof}

We will use the inequalities \eqref{eqn:RMHestimate} and \eqref{eqn:REestimate} in combination with \eqref{eqn:uu'vv'K}, \eqref{eqn:REuu'} and \eqref{eqn:REuu'2} by means of the following basic estimates.
\begin{theorem}\label{Prop:GeneralEstimate}
Let $\Rcal_1\subset\Rcal_2$ be an inclusion of von Neumann algebras acting on a Hilbert space $\Hcal$ and let $u,v\in\Rcal_2$ and $u',v'\in\Rcal_1'$ be unitaries.
\begin{enumerate}[label=(\roman*)]
\item If both vectors $u\Omega$ and $v\Omega$ are standard for both $\Rcal_i$, then
\begin{subequations}
\begin{align}
K^{(1)}_{u'u\Omega,v'v\Omega}&\le v'u K^{(2)}_{\Omega}u^*v'^*
\label{eqn:RMHestimate2}\\
H^{(1)}(v'v\Omega,u'u\Omega)
&\le \innerProd{u^*v \Omega}{K_{\Omega}^{(2)} u^*v \Omega}\eqend{.}\label{eqn:REestimate2}
\end{align}
\end{subequations}
\item If both vectors $u'\Omega$ and $v'\Omega$ are standard for both $\Rcal_i$, then
\begin{subequations}
\begin{align}
K^{(2)}_{uu'\Omega,vv'\Omega}&\ge
uv'K^{(1)}_{\Omega}v'^*u^*
\label{eqn:RMHestimate3}\\
H^{(2)}(vv'\Omega,uu'\Omega)
&\ge \innerProd{v'^*u^*vv' \Omega}{K_{\Omega}^{(1)} v'^*u^*vv'\Omega}\eqend{.}\label{eqn:REestimate3}
\end{align}
\end{subequations}
\end{enumerate}
\end{theorem}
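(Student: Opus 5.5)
The plan is to combine Lemma~\ref{lem:uu'Modular}, which moves unitaries out of the relative modular Hamiltonian, with the monotonicity of Lemma~\ref{lem:ModularEstimate}. Each part applies Lemma~\ref{lem:uu'Modular} for one of the two algebras and then compares with the other algebra.

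For (i), the key observation is that $u,v\in\Rcal_2\subset\Rcal_1''$ commute with $\Rcal_1'$ only partly. So for $\Rcal_1$ I will not treat $u,v$ as elements of $\Rcal_1$. Instead I will use the vectors $u\Omega$ and $v\Omega$ themselves, which are standard for $\Rcal_1$ by assumption, with $u',v'\in\Rcal_1'$.
\begin{enumerate}[label=(\alph*)]
\item Lemma~\ref{lem:uu'Modular} for $\Rcal_1$, with $u,v$ there set to $\one$ and primed unitaries $u',v'$, gives
\[K^{(1)}_{u'u\Omega,v'v\Omega}=v'K^{(1)}_{u\Omega,v\Omega}v'^*.\]
\item Lemma~\ref{lem:ModularEstimate}, applied to the pair $u\Omega,v\Omega$ (standard for both $\Rcal_i$), gives
\[K^{(1)}_{u\Omega,v\Omega}\le K^{(2)}_{u\Omega,v\Omega}.\]
\item Lemma~\ref{lem:uu'Modular} for $\Rcal_2$, with $u,v\in\Rcal_2$ and no primed unitaries, gives
\[K^{(2)}_{u\Omega,v\Omega}=vK^{(2)}_{\Omega}v^*,\]
with $\tilde\Omega=\Omega$. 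I must re-check the ordering in \eqref{eqn:uu'vv'K}. There the left unitary in $K$ is $u$ (from $\tilde\Omega$), not $v$, so this should read $uK^{(2)}_\Omega u^*$, matching the claimed $v'uK^{(2)}_\Omega u^*v'^*$.
\end{enumerate}
Conjugation by the unitary $v'$ preserves operator inequalities in the quadratic-form sense, with form domains mapped accordingly. Combining (a)–(c) therefore yields \eqref{eqn:RMHestimate2}.

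For \eqref{eqn:REestimate2}, take the expectation in $v'v\Omega$. This gives $\langle v\Omega, uK^{(2)}_\Omega u^* v\Omega\rangle=\langle u^*v\Omega,K^{(2)}_\Omega u^*v\Omega\rangle$. More cleanly, I would use \eqref{eqn:REuu'} for the $\Rcal_1$ step and \eqref{eqn:REestimate} for the comparison. The inequality is to be understood with the value $+\infty$ allowed.

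Part (ii) is dual. The vectors $u'\Omega$ and $v'\Omega$ are standard for both algebras, and $u,v\in\Rcal_2$.
\begin{enumerate}[label=(\alph*)]
\item Lemma~\ref{lem:uu'Modular} for $\Rcal_2$, with $u,v\in\Rcal_2$ and the base pair $(u'\Omega,v'\Omega)$, gives
\[K^{(2)}_{uu'\Omega,vv'\Omega}=uK^{(2)}_{u'\Omega,v'\Omega}u^*.\]
\item Monotonicity gives
\[K^{(2)}_{u'\Omega,v'\Omega}\ge K^{(1)}_{u'\Omega,v'\Omega}.\]
\item Since $u',v'\in\Rcal_1'$, Lemma~\ref{lem:uu'Modular} for $\Rcal_1$ gives
\[K^{(1)}_{u'\Omega,v'\Omega}=v'K^{(1)}_\Omega v'^*.\]
\end{enumerate}
Conjugating by $u$ yields \eqref{eqn:RMHestimate3}. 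Taking the expectation in $vv'\Omega$ gives $\langle v'^*u^*vv'\Omega,K^{(1)}_\Omega v'^*u^*vv'\Omega\rangle$, which is \eqref{eqn:REestimate3}.

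The main technical obstacle is making the operator inequalities rigorous for unbounded operators. They should be read as form inequalities, and one has to check that unitary conjugation and the expectation values are compatible with the form domains. Expectation values should be interpreted via the spectral measure, possibly $+\infty$ for the upper quantity; for the lower bound, note that $K$ is not semibounded, so one uses the convention of Araki's relative entropy. A further check is that Lemma~\ref{lem:uu'Modular} applies with only one of $u$ or $u'$ nontrivial. This holds by setting the others to $\one$, and the standardness required there is exactly the hypothesis in each part.
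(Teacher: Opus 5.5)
Your proposal is correct and is essentially the paper's proof. In (i) you strip off $u',v'$ with Lemma~\ref{lem:uu'Modular} on $\Rcal_1$, compare $K^{(1)}_{u\Omega,v\Omega}\le K^{(2)}_{u\Omega,v\Omega}$ via Lemma~\ref{lem:ModularEstimate}, and pull out $u$ with Lemma~\ref{lem:uu'Modular} on $\Rcal_2$ (your corrected ordering $uK^{(2)}_\Omega u^*$ is right), with the mirror-image chain in (ii) and the entropy bounds obtained through \eqref{eqn:REuu'} and \eqref{eqn:REestimate} exactly as in the paper. The one slip is the motivating aside ``$\Rcal_2\subset\Rcal_1''$'', which is backwards since $\Rcal_1''=\Rcal_1\subset\Rcal_2$, but it plays no role in the argument.
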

Beware of the different ordering of $u$ and $u'$, resp. $v$ and $v'$, in the two statements of this theorem.
\proof
Even if $u$ and $v$ are not in $\Rcal_1$ and if they do not commute with $u',v'$, we can estimate 
\begin{align}
K^{(1)}_{u'u\Omega,v'v\Omega}&=v'K^{(1)}_{u\Omega,v\Omega}v'^*\notag\\
&\le v'K^{(2)}_{u\Omega,v\Omega}v'^*\notag\\
&=
v'uK^{(2)}_{\Omega}u^*v'^*\eqend{,}\notag
\end{align}
where we used \eqref{eqn:uu'vv'K} for $u',v'$ in the first line and for $u,v$ in the last line, and \eqref{eqn:RMHestimate} for the inequality. Analogously,
\begin{align*}
    H^{(1)}(v' v \Omega, u' u \Omega)
  &= H^{(1)}(v\Omega,u\Omega)
\nexteq
  &\le H^{(2)}(v\Omega,u\Omega)
\nexteq
  &= \innerProd{u^*v \Omega}{K_{\Omega}^{(2)} u^*v \Omega}
  \eqend{,}
\end{align*}
where we first used the identity \eqref{eqn:REuu'} to discard $u',v'$, then applied the estimate \eqref{eqn:REestimate} and finally used the identity \eqref{eqn:REuu'2}, but now for the larger algebra $\Rcal_2$, which does contain $u$ and $v$.
Similarly we can estimate
\begin{align*}
    K^{(2)}_{u u' \Omega, v v' \Omega}
  &= u K^{(2)}_{u' \Omega, v' \Omega} u^*
\nexteq
  &\geq u K^{(1)}_{u' \Omega, v' \Omega} u^*
\nexteq
  &= u v' K^{(1)}_{\Omega} v'^* u^*
\end{align*}
and
\begin{align*}
    H^{(2)}(v v' \Omega, u u' \Omega)
  &= \innerProd{u^* v v' \Omega}{
      K_{u' \Omega, v' \Omega}^{(2)} u^* v v' \Omega
    }
\nexteq
  &\geq \innerProd{u^* v v' \Omega}{
      K_{u' \Omega, v' \Omega}^{(1)} u^* v v' \Omega
    }
\nexteq
  &= \innerProd{v'^* u^* v v' \Omega}{K_{\Omega}^{(1)}v'^* u^* v v' \Omega}
  \eqend{.}
  \qquad\qed
\end{align*}
\begin{remark}
By similar arguments, using an exponentiated version of \eqref{eqn:uu'vv'K} in Lemma \ref{lem:uu'Modular} 
and $\Delta^{(1)}_{\tilde{\Omega},\Omega}\ge\Delta^{(2)}_{\tilde{\Omega},\Omega}$ instead of Lemma \ref{lem:ModularEstimate} one can also show that under the assumptions of Theorem \ref{Prop:GeneralEstimate} we have (i) $\Delta^{(1)}_{u'u\Omega,v'v\Omega}\ge v'u\Delta^{(2)}_{\Omega}u^*v'^*$ and (ii) $\Delta^{(2)}_{uu'\Omega,vv'\Omega}\le uv'\Delta^{(1)}_{\Omega}v'^*u^*$.\footnote{K.S. thanks Stefan Hollands for enquiring about these analogs to Theorem \ref{Prop:GeneralEstimate}.}
\end{remark}
Note that the assumptions of Theorem \ref{Prop:GeneralEstimate} only depend on the states on the algebra $\Rcal_2$, not on the choice of vectors that implement these states.
Indeed, if $\tilde{\Omega}$ defines the same state as $v' v \Omega$ on
$\Rcal_2$, then $\tilde{\Omega} = w' v' v \Omega$ for some unitary $w' \in \Rcal_2'$ by \eqref{def:u'}, so the assumption of item (i) still holds with $w' v'$ instead of $v'$.
A similar argument holds for the vectors $u' u \Omega$, $v v' \Omega$ and $u u' \Omega$.
In the estimates for the relative entropies, the unitaries from $\Rcal_2'$ cancel out.

\subsection{Estimates on relative modular Hamiltonians in quantum field theory}\label{ssec:applicationinQFT}
Let $M=(\Mcal,g)$ be a smooth globally hyperbolic Lorentzian manifold. To describe a QFT in $M$ we assign to each causally convex open subset $V\subset M$ a $C^{\star}$-algebra $\Acal_V$, which we think of as being generated by smearing any quantum fields of the theory with smooth test functions (or tensors) compactly supported in $V$. We will make the following standard assumption:
\begin{align}
V_1\subset V_2\quad&
\Rightarrow\quad
\Acal_{V_1}\subset\Acal_{V_2}\eqend{.}&(\text{\emph{isotony}})\notag
\end{align}
Any algebraic state $\omega$ on $\Acal_M$ restricts to a state on $\Acal_V$ for any causally convex open region $V\subset M$ and we denote by $\Rcal^{(V)}_{\omega}$ the von Neumann algebra $\pi_{\omega}(\Acal_V)''$ generated from $\Acal_V$ in the GNS-representation of $\omega|_{\Acal_V}$.

If $V$ is bounded, physically reasonable states on $\Acal_M$ are expected to have quasi-equivalent restrictions to $\Acal_V$. This means that every such state can be expressed by a density matrix in the GNS-representation of any other such state. This explains why the choice of Hilbert space representation is irrelevant for the description of local phenomena in QFT, cf. \cite{witten2022does,haag2012local}.

In many cases the GNS-vector $\Omega_{\omega}$ is a standard vector for 
$\Rcal^{(V)}_{\omega}$. This occurs e.g.~when $\omega$ is a vacuum vector in Minkowski space and $V$ is a bounded open region \cite{ReehSchlieder:1961}. For more details on this property we refer to Section II of the nice review by Witten \cite{Witten:RevModPhys.90.045003}. 

Given any two quasi-equivalent states $\omega$, $\tilde{\omega}$ on $\Acal_V$ whose GNS-vectors are both standard, the Unitary Implementation Theorem applies, so we can represent both states by unit vectors $\Omega$ and $\tilde{\Omega}$ in a single (GNS-)Hilbert space representation. Moreover, we can introduce the relative modular theory of Section \ref{sec:ModularGeneralities}. Let us fix a standard unit vector $\hat{\Omega}$ for $\Rcal=\Rcal_{\omega}^{(V)}$ for reference purposes. If we can write $\Omega=vv'\hat{\Omega}$ and $\tilde{\Omega}=uu'\hat{\Omega}$ for some unitaries $u,v\in\Rcal$ and $u',v'\in\Rcal'$, then we can obtain results about the relative modular Hamiltonian and the relative entropy directly from Lemma \ref{lem:uu'Modular}. This case already covers a lot of states, due to the Connes-St{\o}rmer Theorem \cite{Connes1978}.

More generally, if we have three causally convex regions $V_1\subset V_2\subset V_3$ in $M$ and if $\hat{\Omega}$ is a reference state that is standard for all three algebras $\Rcal_i \coloneq \pi_{\hat{\omega}}(\Acal_{V_i})''$, $i=1,2,3$, then we may try to apply the results of Theorem \ref{Prop:GeneralEstimate} to estimate the relative modular Hamiltonian $K^{(2)}_{\tilde{\Omega},\Omega}$ and the relative entropy
$H^{(2)}(\Omega,\tilde{\Omega})$ from below, in terms of $K^{(1)}_{\hat{\Omega}}$, and from above, in terms of  $K^{(3)}_{\hat{\Omega}}$. To apply the theorem, however, we need to be able to express the states $\tilde{\Omega}$ and $\Omega$ in terms of the reference state $\hat{\Omega}$ and suitably localised unitaries. To examine what this implies, let us analyse item (i) of Theorem~\ref{Prop:GeneralEstimate}, considering only the two regions \(V_2\subset V_3\) (with appropriate relabelling), cf. Figure \ref{subfig:sorkin2}. (A similar argument could be made for item (ii) with regions \(V_1\subset V_2\).)

Item (i) of the Theorem \ref{Prop:GeneralEstimate} requires $\Omega=v'v\hat{\Omega}$, with \(v'\in\Rcal_2'\), \(v\in\Rcal_3\) and similarly for $\tilde{\Omega}$. Note that these conditions are independent of the choice of vectors, as long as they define the same state on $\Rcal_3$. Indeed, if $\Omega'$, resp.~$\hat{\Omega}'$, define the same states on $\Rcal_3$ as $\Omega$, resp.~$\hat{\Omega}$, then we must have $\Omega'=w\Omega$, $\hat{\Omega}'=\hat{w}\hat{\Omega}$ for some unitaries $w,\hat{w}\in\Rcal_3'$. The assumption of item (i) is then still verified with
$wv'\hat{w}^*$ instead of $v'$.

\medskip

\begin{figure}[htbp]
  \centering
    \begin{subfigure}{0.45\textwidth}
        \centering
        \includegraphics{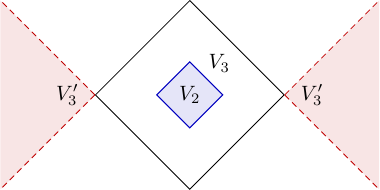}
        \caption{Regions in Theorem \ref{Prop:GeneralEstimate}(i).}
        \label{subfig:sorkin2}
    \end{subfigure}
    \hfill
    \begin{subfigure}{0.45\textwidth}
        \centering
        \includegraphics{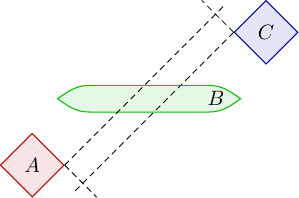}
        \caption{Sorkin's impossible measurement setup.}
        \label{subfig:sorkin1}
    \end{subfigure}
      \caption{A comparison of the setup described in~\cite{Sorkin:1993} and Theorem \ref{Prop:GeneralEstimate}(i).}
    \label{fig:sorkin}
\end{figure}

\medskip

To simplify the discussion, take the case where \(\Omega=\hat{\Omega}\), i.e. one of the two states is our reference state. Then, in this setting, the theorem requires $\tilde{\Omega}=u'u\Omega$, with \(u'\in\Rcal_2'\), \(u\in\Rcal_3\). It is then interesting to note that the applicability of Theorem~\ref{Prop:GeneralEstimate} is related to Sorkin's paradox. In~\cite{Sorkin:1993}, Sorkin points out that it is possible to construct examples of superluminal signalling, so-called \enquote{impossible measurements}. More recently,~\cite{borsten2021impossible} and~\cite{jubb2022causal} have formulated this discussion in terms of conditions on unitary operations to be non-signalling. The typical setting of these impossible measurements considers three observers, Alice, Bob and Charlie, located in corresponding regions \(A,B,C\), such that \(A\) and \(C\) are spacelike separated, so Alice's actions cannot influence Charlie. However, \(B\) may lie partially in the future of \(A\) and partially in the past of \(C\), so Alice's actions can influence Bob and Bob's actions can influence Charlie, cf. Figure \ref{subfig:sorkin1}. An impossible measurement is a unitary operation performed by Bob in \(B\) that allows Charlie to determine whether Alice has performed some unitary operation on the shared quantum system or not. The existence of such impossible measurements is paradoxical, because it apparently violates the causality between \(A\) and \(C\).

In the context of QFT, we may formalise this situation in the following way. Let \(\Rcal_A\) and \( \Rcal_C\) be the local algebras relative to regions \(A\) and \(C\) acting on a Hilbert space \(\Hcal\), so \(\left[\Rcal_A,\Rcal_C\right]=0\). Then, Bob's measurement in \(B\) is given by the adjoint action of some unitary \(w\) on \(\Hcal\) and
\begin{eqnarray}
w\ \text{does not allow signalling from}\ A\ \text{to}\ C&\text{iff}\quad w \Rcal_Aw^*\subset\Rcal_C'\eqend{,}\notag\\
w\ \text{does not allow signalling from}\ C\ \text{to}\ A&\text{iff}\quad w\Rcal_C w^*\subset\Rcal_A'\eqend{,}\notag
\end{eqnarray}
cf.~\cite[Sec.~III]{borsten2021impossible}.

As an example we consider unitaries of the simple product form \(w=u'u\), with \(u'\in\Rcal_C'\), \(u\in \Rcal_A'\). Such operators are clearly non-signalling, as
\begin{equation*}
w\Rcal_Aw^*=u'u\Rcal_Au^*u'^*=u'\Rcal_Au'^*\subset\Rcal_C'\eqend{.}
\end{equation*}
We may now relate the discussion back to the unitaries of Theorem~\ref{Prop:GeneralEstimate}, considering the regions \(V_2\subset V_3\) and identifying \(A=V_3'\) and \(C=V_2\) as shown in Figure~\ref{fig:sorkin}, so \(u\in\Rcal_A'=\Rcal_3\) and \(u'\in\Rcal_C'=\Rcal_2'\). In this way, we are comparing signalling in the direction \(A\) to \(C\) with Theorem~\ref{Prop:GeneralEstimate}(i) (similarly, (ii) could be compared with signalling from \(C\) to \(A\) considering regions \(V_1\subset V_2\)).

Now, one may ask how many pairs of states can be covered by our scheme. As mentioned before, Connes and
St{\o}rmer~\cite{Connes1978} tell us that if $\Rcal_2$ is a type III$_1$ von Neumann factor, the set of states
$\{\tilde{\Omega}=u'u\hat{\Omega}\mid u\in\Rcal_2, u'\in\Rcal_2'\}$ is dense in the unit ball of $\Hcal$. Because
$u'u\hat{\Omega}=uu'\hat{\Omega}$ we see that this set of states satisfies the assumptions of part (i) of Theorem~\ref{Prop:GeneralEstimate} w.r.t.~$\Rcal_2$ and $\Rcal_3$ and of part (ii) w.r.t.~$\Rcal_1$ and $\Rcal_2$. We can therefore certainly use any $\Omega,\tilde{\Omega}$ from this dense set.

On the other hand, it follows from the previous paragraphs that  for $\Omega=\hat{\Omega}$ our upper estimate only applies if there is a unitary $w$ such that $\tilde{\Omega}=w\Omega$ and $w$ does not allow signalling from $V_3'$ to $V_2$. (If we allow general unitaries on $\Hcal$, this may not preclude the existence of another unitary $w'$ such that $\tilde{\Omega}=w'\Omega$ and $w'$ does allow signalling.) One may now ask whether the existence of such a non-signalling unitary $w$ satisfying
$\tilde{\Omega}=w\Omega$ is also sufficient to ensure that our upper estimate applies. In complete analogy, for our lower bound to apply it is necessary that there exists a unitary $w$ such that $\tilde{\Omega}=w\Omega$ and $w$ does not allow signalling from $V_1$ to $V_2'$ and one may wonder whether the existence of such a $w$ is also sufficient for the lower bound to apply.

For type I\(_\infty\) factors the following (counter-)example shows that this is not the case.
\begin{example}[Type I\(_\infty\) factors]\label{ex:nonproductsignallingunitary} 
Let \(\Hcal\) be an infinite dimensional Hilbert space, \(n\in\Nbb\) and \(\{u_i\}_{i=1}^n\) operators on \(\Hcal\) satisfying
\begin{align}
\label{eq:CuntzConditions}
    u_i u_j^*
  &= \delta_{ij} \one
  \eqend{,}
&
    \sum_i u_i^* u_i
  &= \one
  \eqend{.}
\end{align}
Then, the operators \(u_i\) generate a Cuntz algebra. We note that a von Neumann algebra \(\Rcal\) acting on \(\Hcal\) contains operators \(u_i\in\Rcal\) with these properties if it is a von Neumann factor of type I\(_{\infty}\) or of type III, cf. \cite[Def. 6.5.1 and Lemma 6.3.3]{KadisonRingroseII}.

Now consider the type I\(_{\infty}\) von Neumann factors \(\Rcal_2=\Rcal_3=\Cbb\one\otimes \Bcal(\Hcal)\) on
\(\Hcal\otimes\Hcal\) and operators \(u_i\in \Rcal_3\) and \(u'_i\in\Rcal_2'=\Bcal(\Hcal)\otimes \Cbb\one\) such that \(u_i\) and
\((u'_i)^*\) satisfy the relations \eqref{eq:CuntzConditions}. Then \(w \coloneq \sum_{i=1}^n u'_iu_i\) is a unitary which is non-signalling between \(\Rcal_3'\) and \(\Rcal_2\), because
\begin{equation}\label{eq:nonsignallingsums}
w \Rcal_3' w^*=\sum_{i,j=1}^n u_i'u_i\Rcal_3'u_j^*u_j'^*=\sum_{i,j=1}^nu_i'u_iu_j^*\Rcal_3'u_j'^*=\sum_{i=1}^n u_i'\Rcal_3'u_i'^*\subset\Rcal_2'\eqend{.}
\end{equation}

We may write \(u_i=\one\otimes v_i\) and \(u_i'=v_i'\otimes\one\), where \(v_i\) and \((v'_i)^*\) also satisfy the relations \eqref{eq:CuntzConditions}. Let the subspaces \(\Hcal'_i\) be the ranges of the projections \(v_i'v_i'^*\), so
\(\Hcal=\bigoplus_{i=1}^n\Hcal'_i\) and similarly for \(\Hcal_i\) with \(v_i^*v_i\). For \(n>1\), we will choose a suitable standard vector
\(\hat{\Omega}\) for \(\Rcal_2=\Rcal_3\) as follows. We fix two orthonormal vectors \(\chi_1,\chi_2\in\Hcal\) and we define the unit vector \(\psi_1 \coloneq \frac{1}{\sqrt{2}}(v_1^*\chi_1+v_2^*\chi_2)\). We extend the vector \(\psi_1\) to an orthonormal basis \(\{\psi_i\}\) 
of \(\Hcal\) and we choose another orthonormal basis \(\{\psi'_i\}\). For any \(\varepsilon\in(0,1)\) we then choose a sequence of positive numbers \(\{c_i\}\) with \(c_1=1-\varepsilon\) and we set \(\hat{\Omega} \coloneq \sum_{i=1}^{\infty}c_i\psi_i'\otimes\psi_i\). This is indeed a standard vector, because \(c_i>0\) for all \(i\). Now suppose that there exist two unitaries \(u,u'\) on \(\Hcal\) such that
\(w\hat{\Omega}=u'\otimes u\hat{\Omega}\). We will derive a contradiction, showing that no such \(u,u'\) exist.

Note that \(\|\hat{\Omega}-c_1\psi_1'\otimes\psi_1\|^2=1-c_1^2=2\varepsilon-\varepsilon^2\), so
\(\|\hat{\Omega}-c_1\psi_1'\otimes\psi_1\|\le\sqrt{2\varepsilon}\). Because \(w\) and \(u'\otimes u\) are unitaries, we similarly have
\(\|w(\hat{\Omega}-c_1\psi_1'\otimes\psi_1)\|\le\sqrt{2\varepsilon}\) and
\(\|u'\otimes u(\hat{\Omega}-c_1\psi_1'\otimes\psi_1)\|\le\sqrt{2\varepsilon}\) and hence
\begin{align}
\|(u'\otimes u)\hat{\Omega}-w\hat{\Omega}\|&\ge (1-\varepsilon)\|u'\psi_1'\otimes u\psi_1-w\psi_1'\otimes\psi_1\|-2\sqrt{2\varepsilon}\eqend{,}\notag
\end{align}
where we used \(c_1=1-\varepsilon\). Now, \(w=\sum_{i=1}^nv_i'\otimes v_i\) and 
\(v_i\psi_1=\frac{1}{\sqrt{2}}\chi_i\) 
with 
\(\chi_i=0\) for \(i>2\) imply
\begin{align}
w\psi'_1\otimes\psi_1&=\sum_{i=1}^n v_i'\psi'_1\otimes v_i\psi_1=\frac{1}{\sqrt{2}}\sum_{i=1}^2 v_i'\psi'_1\otimes \chi_i\eqend{.}\notag
\end{align}
With this and the Cauchy-Schwarz inequality we estimate
\begin{align*}
    \left\| u' \psi_1' \otimes u \psi_1 - w \psi_1' \otimes \psi_1 \right\|^2
  &= 2 - \sqrt{2} \sum_{i = 1}^2 \Re\langle u' \psi_1', v_i' \psi'_1 \rangle \; \langle u \psi_1, \chi_i \rangle
\nexteq
  &\geq 2 - \sqrt{2} \sum_{i = 1}^2 |\langle u'\psi_1', v_i' \psi'_1 \rangle| \; |\langle u \psi_1, \chi_i \rangle|
\nexteq
  &\geq 2 - \sqrt{2} \sqrt{\sum_{i = 1}^2 |\langle u' \psi_1', v_i' \psi'_1 \rangle|^2} \sqrt{\sum_{i = 1}^2 |\langle u \psi_1, \chi_i \rangle|^2}
\nexteq
  &\geq 2 - \sqrt{2} \sqrt{\| u' \psi_1' \|^2} \sqrt{\| u \psi_1 \|^2}
\nexteq
  &= 2 - \sqrt{2}
  \eqend{,}
\end{align*}
where we used the fact that the vectors \(v_i'\psi'_1\) are orthonormal. It then follows that
\begin{align*}
    \left\| (u' \otimes u) \hat{\Omega} - w \hat{\Omega} \right\|
  &\geq (1 - \varepsilon) \sqrt{2 - \sqrt{2}} - 2\sqrt{2 \varepsilon}
\end{align*}
which is strictly positive when $\varepsilon$ is sufficiently small.
\end{example}

Note that the unitaries \(w\) described in this example are thus non-signalling but not of the simple product form \(w=u'u\) of Theorem~\ref{Prop:GeneralEstimate} when \(n>1\), for otherwise \(w\psi'_1\otimes\psi_1\) would also be a simple tensor product, which is not the case. In the QFT case, however, if \(\Rcal_2'\cap \Rcal_3\) is a type III\(_1\) factor, then \(w=u'u\). To see this, we note that we can choose operators \(c_1,\ldots,c_n\in\Rcal_2'\cap \Rcal_3\) generating a Cuntz algebra, so \(c_i^*c_j=\delta_{ij}\one\) and \(\sum_i c_ic_i^*=\one\). Then we can define \(u' \coloneq \sum_i u_i'c_i^*\) and \(u \coloneq \sum_i c_i u_i\), which will be unitary by the relations~\eqref{eq:CuntzConditions}, and we get that
\begin{equation*}
    u'u=\sum_{i,j=1}^n u_i' c_i^*c_ju_j =w \eqend{.}\notag
\end{equation*}
The reason why this argument breaks down in the example is that there we had \(\Rcal_2'\cap\Rcal_3=\Cbb\one\otimes\one\), which does not contain operators \(c_i\) satisfying~\eqref{eq:CuntzConditions}.

\section{Generalities on coherent states}\label{sec:coherentstates}

In this section we consider coherent states in general, bosonic CCR systems. Before we apply our general estimates to these states, we first review the fact that one can often compute relative modular Hamiltonians explicitly due to some remarkable simplifications. Moreover, the relative entropy can always be determined, but it might be infinite. Let us begin by 
fixing our notations following \cite[Sec.~5.2.1]{brattelirobinson1997} and \cite[Sec.~X.7]{ReedSimonII}.

\subsection{Fock space preliminaries}\label{ssec:fockspacebasics}

Let $\Hcal$ be a complex Hilbert space and $\Fcal_+(\Hcal)=\bigoplus_{n=0}^{\infty}\Hcal^{\otimes_sn}$ its bosonic Fock space
with Fock vector $\Omega_{\mathrm{F}} \in \Hcal^{\otimes_s 0} = \Cbb$.
For any unitary $u$ on $\Hcal$ we denote the second quantised unitary on $\Fcal_+(\Hcal)$ by $\Gamma(u)$, so that $\Gamma(u)^* = \Gamma(u^*)$.
For a self-adjoint operator $h$ on $\Hcal$ we define $\d\Gamma(h)$ implicitly by $e^{\i t \d\Gamma(h)} = \Gamma(e^{\i t h})$ for all $t \in \Reals$.

For any $\chi \in \Hcal$ we let $a^*(\chi)$ and $a(\chi)$ denote the usual creation and annihilation operators, satisfying 
$a(\chi) \Omega_{\mathrm{F}} = 0$ and the canonical commutation relations
\begin{align*}
    \comm{a(\chi)}{a^*(\xi)}
  &= \langle \chi, \xi \rangle \one
\nexteq
    \comm{a(\chi)}{a(\xi)}
  &= 0
\end{align*}
for all $\chi,\xi\in\Hcal$. Note that the operators $a^*(\chi)$ and $a(\chi)$ are closed, unbounded if $\chi\not=0$ and each other's adjoints. We define the Segal field operator by
\begin{align*}
    \phi_{\textup{S}}(\chi)
  &\coloneq \frac{1}{\sqrt{2}}
    \bigl( a^*(\chi) + a(\chi) \bigr)
\end{align*}
for any $\chi\in\Hcal$. This operator is self-adjoint and unbounded if $\chi\not=0$.

Weyl operators are unitaries defined by $W(\chi) \coloneq \e^{\i \phi_{\textup{S}}(\chi)}$ for $\chi\in\Hcal$ and they satisfy the Weyl relations
\begin{align}
\label{eqn:Weylrelations}
    W(\chi) W(\xi)
  &= \e^{-\frac{\i}{2} \Im\langle\chi,\xi\rangle} W(\chi + \xi)
\end{align}
for all $\chi,\xi\in\Hcal$. For any closed real linear subspace $H\subset\Hcal$ we consider the von Neumann algebra $\Rcal(H) \coloneq \{W(\chi)\mid \chi\in H\}''$ generated by the Weyl operators with $\chi\in H$. It is known that $\Omega_{\mathrm{F}}$ is a cyclic, resp.~separating, vector for
$\Rcal(H)$ iff $\overline{H + \i H} = \Hcal$, resp.~$H \cap \i H = \{ 0 \}$, cf. \cite[Thm.~3.12]{araki1982quasi}. To have a standard vector we therefore need to assume both these properties and in that case $H$ is called a standard subspace. If $H\subset\Hcal$ is standard, then it is known that the modular Hamiltonian of \(\Rcal(H)\) for \(\Omega_{\mathrm{F}}\) is of second quantised form,
\begin{align*}
    K_{\Omega_{\mathrm{F}}}
  &= \d\Gamma(K_H)
  \eqend{,}
\end{align*}
where $K_H$ is a self-adjoint operator on $\Hcal$. Indeed, modular theory can be developed for standard subspaces \cite{longo2008real}, where we have $K_H=-\log(\Delta_H)$ and $\Delta_H$ appears in the polar decomposition $S_H = J_H \sqrt{\Delta_H}$ of the closed antilinear operator $S_H : H + \i H \to H + \i H$ defined by $S_H(\chi + \i \xi) = \chi - \i \xi$ for all $\chi, \xi \in H$.

For later convenience we mention here the following technical lemma.
\begin{lemma}
\label{lem:strongderivative}
Let $I\subset\Rbb$ be an open interval containing $0$ and $h:I\to\Hcal$ a continuous $\Hcal$-valued function with $h(0)=0$. If
$h'(0)$ exists and $\psi\in\Fcal_+(\Hcal)$ is a finite particle vector, then
\begin{align*}
    \left. \deriv{}{t} W(h(t)) \psi \right|_{t = 0}
  &= \i \phi_{\textup{S}} (h'(0)) \psi
  \eqend{.}
\end{align*}
\end{lemma}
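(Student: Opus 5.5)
The plan is to reduce the derivative of $t\mapsto W(h(t))\psi$ at $t=0$ to the derivative of $s\mapsto W(s\chi)\psi$ for a fixed vector $\chi$, which is Stone's theorem, and to show that the remaining error term is $o(t)$. Write $h(t)=t\chi+r(t)$ with $\chi\coloneq h'(0)$ and $\|r(t)\|=o(|t|)$. The Weyl relations \eqref{eqn:Weylrelations} give
\begin{align*}
W(h(t))=\e^{\frac{\i}{2}\Im\langle r(t),t\chi\rangle}\,W(r(t))\,W(t\chi)\eqend{.}
\end{align*}
Hence
\begin{align*}
\frac{W(h(t))\psi-\psi}{t}
&=\e^{\frac{\i}{2}\Im\langle r(t),t\chi\rangle}W(r(t))\,\frac{W(t\chi)\psi-\psi}{t}\\
&\quad+\frac{\e^{\frac{\i}{2}\Im\langle r(t),t\chi\rangle}W(r(t))\psi-\psi}{t}\eqend{.}
\end{align*}
I will treat the two terms separately.

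\emph{First term.} Every finite particle vector lies in the domain of $\phi_{\textup{S}}(\chi)$. Therefore Stone's theorem for the unitary group $s\mapsto W(s\chi)=\e^{\i s\phi_{\textup{S}}(\chi)}$ gives $(W(t\chi)\psi-\psi)/t\to\i\phi_{\textup{S}}(\chi)\psi$. Since the prefactor is a uniformly bounded family of unitaries, it suffices to show that $W(r(t))\to\one$ strongly as $t\to0$ and that the phase tends to $1$. The phase is immediate because $|\Im\langle r(t),t\chi\rangle|\le|t|\,\|r(t)\|\,\|\chi\|\to0$. For the strong convergence I use that $\chi\mapsto W(\chi)$ is strongly continuous on $\Hcal$. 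For a finite particle vector this follows from the bound $\|(W(\xi)-\one)\phi\|\le\|\phi_{\textup{S}}(\xi)\phi\|$. The case of general vectors then follows by density and uniform boundedness.

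\emph{Second term.} This is the main step. The phase contributes $O(\|r(t)\|)=o(t)$, so it remains to show $\|(W(r(t))-\one)\psi\|=o(|t|)$. I would use the spectral-theorem estimate $|\e^{\i x}-1|\le|x|$, which gives
\begin{align*}
\|(W(\xi)-\one)\psi\|\le\|\phi_{\textup{S}}(\xi)\psi\|\eqend{.}
\end{align*}
For a vector $\psi$ with at most $N$ particles, the standard bounds $\|a^{\#}(\xi)\psi\|\le\sqrt{N+1}\,\|\xi\|\,\|\psi\|$ then yield $\|\phi_{\textup{S}}(\xi)\psi\|\le\sqrt{2(N+1)}\,\|\xi\|\,\|\psi\|$. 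With $\xi=r(t)$ this is $O(\|r(t)\|)=o(|t|)$, as required.

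Combining the two terms gives the claimed derivative $\i\phi_{\textup{S}}(h'(0))\psi$. The only delicate point is the second term, because $h$ is differentiable only at $0$, so no group structure in $t$ is available. This is handled by the linear-in-$\|\xi\|$ bound on $\|(W(\xi)-\one)\psi\|$, which holds for finite particle vectors and is the reason the lemma restricts to such $\psi$.
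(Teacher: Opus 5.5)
Your argument is correct, but it takes a different route from the paper's. The paper does not split $h(t)$ at all. It uses the iterated particle-number bound $\|\phi_{\textup{S}}(\xi)^n\psi\|\le(2(N+1))^{n/2}\|\xi\|^n\sqrt{n!}\,\|\psi\|$ to show that the exponential series for $W(\xi)\psi$ converges absolutely on finite particle vectors. It then expands the difference quotient directly, bounding the linear term by $\sqrt{2(N+1)}\,\|\tfrac1t h(t)-h'(0)\|\,\|\psi\|$ and the tail $n\ge2$ by $\sum_{n\ge2}(n!)^{-1/2}(2(N+1)c^2)^{n/2}|t|^{n-1}\|\psi\|$, using $\|h(t)\|\le c|t|$ near $0$. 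You instead use the Weyl relations to peel off the one-parameter group $s\mapsto W(s\chi)$, so that Stone's theorem produces the derivative. You then dispose of the non-group remainder $W(r(t))$ with the first-order spectral estimate $\|(W(\xi)-\one)\phi\|\le\|\phi_{\textup{S}}(\xi)\phi\|$ combined with the single-step bound $\|\phi_{\textup{S}}(\xi)\psi\|\le\sqrt{2(N+1)}\,\|\xi\|\,\|\psi\|$. This gives a shorter argument with no iterated estimates, no combinatorial bound on $\sqrt{(N+n)!/N!}$, and no power-series convergence. Moreover, since you only ever need $\|\phi_{\textup{S}}(\xi)\psi\|\le C_\psi\|\xi\|$, your proof carries over, with the obvious modification of the particle-number bound, to every $\psi$ in the domain of the square root of the number operator. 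The paper's series argument, as written, needs analytic vectors. The paper's route, in turn, stays at the level of explicit Fock-space estimates, avoids both the Weyl relations and Stone's theorem, and shows as a by-product that finite particle vectors are analytic for every $\phi_{\textup{S}}(\chi)$. One small simplification: $\i\phi_{\textup{S}}(\chi)\psi$ is itself a finite particle vector, so you do not need the density argument for the strong convergence $W(r(t))\to\one$ in your first term.
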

\begin{proof}
The proof uses standard particle number estimates. Suppose $\psi\in\bigoplus_{n=0}^N\Hcal^{\otimes_sn}$ has at most $N$ particles for some $N\in\Zbb_{\ge0}$. For any $\chi\in\Hcal$ we then have $\|a^*(\chi)\psi\|\le\sqrt{N+1}\|\chi\|\cdot\|\psi\|$ and
$\|a(\chi)\psi\|\le\sqrt{N}\|\chi\|\cdot\|\psi\|$ and therefore also $\|\phi_{\textup{S}}(\chi)\psi\|\le\sqrt{2(N+1)}\|\chi\|\cdot\|\psi\|$. Iterating this estimate $n$ times we find $\|\phi_{\textup{S}}(\chi)^n\psi\|\le2^{\frac{n}{2}}\sqrt{\frac{(N+n)!}{N!}}\|\chi\|^n\|\psi\|$ for any $n\in\Nbb$. Using $\sqrt{\frac{(N+n)!}{N!}}\le(N+1)^{\frac{n}{2}}\sqrt{n!}$ this estimate can be brought into the more convenient form
\begin{align}
\label{eqn:absconv}
    \|\phi_{\textup{S}}(\chi)^n\psi\|
  &\leq \bigl( 2 (N + 1) \bigr)^{\frac{n}{2}}
    \|\chi\|^n \sqrt{n!} \|\psi\|
  \eqend{.}
\end{align}
It follows that the power series expansion for $W(\chi)\psi$ converges absolutely, because 
\begin{align*}
    \left\|\sum_{n = 0}^{\infty} \frac{\i^n}{n!} \phi_{\textup{S}}(\chi)^n \psi \right\|
  &\leq \sum_{n = 0}^{\infty}
      \frac{1}{\sqrt{n!}} \bigl( 2 (N + 1) \bigr)^{\frac{n}{2}}
      \|\chi\|^n \|\psi\|
  < \infty
  \eqend{.}
\end{align*}
Similarly, for $t\not=0$ the series
\begin{align*}
    \frac{1}{t} \Bigl[ W\bigl( h(t) \bigr) \psi - \psi \Bigr]
    - \i \phi_{\textup{S}}\bigl( h'(0) \bigr) \psi
  &= \i \phi_{\textup{S}} \left( \frac{1}{t} h(t) - h'(0) \right) \psi
    + \sum_{n = 2}^{\infty} \frac{\i^n}{n!} \frac{1}{t} \phi_{\textup{S}}\bigl( h(t) \bigr)^n \psi
\end{align*}
converges absolutely.
To estimate the terms in the series we use the assumptions on $h(t)$. For any $\varepsilon\in(0,1]$ there is a
$\delta>0$ such that $|t|<\delta$ implies $\left\|\frac{1}{t}h(t)-h'(0)\right\|<\varepsilon\le1$ and hence
$\|h(t)\|<c|t|$ with $c \coloneq 1+\|h'(0)\|$. Using \eqref{eqn:absconv} we then find
\begin{align*}
    \left\|
      \frac{1}{t} \Bigl[ W\bigl( h(t) \bigr) \psi - \psi \Bigr]
      - \i \phi_{\textup{S}}\bigl( h'(0) \bigr) \psi
    \right\|
  &\leq \sqrt{2(N+1)} \varepsilon \|\psi\|
    + \sum_{n = 2}^{\infty} \frac{1}{\sqrt{n!}} \bigl( 2(N + 1) c^2 \bigr)^{\frac{n}{2}}
    |t|^{n - 1} \|\psi\|
\end{align*}
if $|t|<\delta$. The series converges for any $t\in\Rbb$ and we can make it arbitrarily small by choosing $|t|$ sufficiently small. Because $\varepsilon>0$ was arbitrary we find
\begin{align*}
    \lim_{t \to 0} \frac{1}{t} \Bigl[ W\bigl( h(t) \bigr) \psi - \psi\Bigr] -\i \phi_{\textup{S}}\bigl( h'(0) \bigr) \psi
  &= 0
  \eqend{.}
\end{align*}
The result then follows from the observation that $W\bigl( h(0) \bigr) \psi = W(0) \psi = \psi$.
\end{proof}

\subsection{Relative modular Hamiltonians for coherent states}\label{ssec:relmodhamiltonianscoherentstates}

A coherent state, or rather a coherent excitation of $\Omega_{\mathrm{F}}$, is a unit vector of the form $W(\chi)\Omega_{\mathrm{F}}$ for some $\chi\in\Hcal$. Note that the unitary $W(\chi)$ preserves any von Neumann algebra of the form $\Rcal(H)$ with $H\subset\Hcal$ closed, because the Weyl relations imply
\begin{align}
\label{eq:Weylcommutator}
    W(-\chi) W(\xi) W(\chi)
  &= \e^{\i \Im\langle \chi,\xi\rangle}W(\xi)
\end{align}
for any $\xi\in H$. This means that relative modular Hamiltonians exist for any pair of coherent states and also the relative entropy can always be defined, although it may sometimes be infinite.

For many pairs of coherent states the relative modular Hamiltonian can be computed exactly. The following result concerns a slightly more general class of states, but uses the same methods as \cite{Ciolli2020,Longo:2019}.

\begin{proposition}\label{prop:weylactionomodularoperators}
Let $H\subset\Hcal$ be a standard subspace, $\hat{\Omega}\in\Fcal_+(\Hcal)$ a standard vector for $\Rcal(H)$ and
$\Omega \coloneq W(\chi)\hat{\Omega}$, $\tilde{\Omega} \coloneq W(\tilde{\chi})\hat{\Omega}$ for some
$\chi,\tilde{\chi}\in\Hcal$. If there are $h\in H$ and $h'\in H'$ such that $\chi-\tilde{\chi}=h+h'$, then $W(\chi)\Rcal(H)W(-\chi)=\Rcal(H)$ and
\begin{align}
J_{\tilde{\Omega},\Omega}&=W(\tilde{\chi}+h)J_{\hat{\Omega}}W(-\chi+h)\notag\\
K_{\tilde{\Omega},\Omega}&=W(\chi-h)K_{\hat{\Omega}}W(-\chi+h)\eqend{.}\notag
\end{align}
\end{proposition}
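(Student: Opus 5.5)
The plan is to reduce the statement to Lemma~\ref{lem:uu'Modular} by factoring the Weyl operators into a piece in $\Rcal(H)$, a piece in $\Rcal(H)'$, and a common unitary that preserves $\Rcal(H)$.

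\emph{Invariance.} For any $\xi\in H$, \eqref{eq:Weylcommutator} gives $W(\chi)W(\xi)W(-\chi)=\e^{-\i\Im\langle\chi,\xi\rangle}W(\xi)$. Hence $W(\chi)$ maps the generators of $\Rcal(H)$ to scalar multiples of themselves, and likewise for $W(-\chi)$. Therefore $W(\chi)\Rcal(H)W(-\chi)=\Rcal(H)$. By Lemma~\ref{lem:UModular}, $\Omega$ and $\tilde\Omega$ are then standard, so the relative modular objects are defined.

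\emph{Factorisation.} Set $\eta\coloneq\chi-h=\tilde\chi+h'$. We have $W(h)\in\Rcal(H)$. We also have $W(h')\in\Rcal(H')=\Rcal(H)'$, using the standard duality result $\Rcal(H)'=\Rcal(H')$ for the Weyl algebra, where $H'$ is the symplectic complement; the inclusion $\Rcal(H')\subset\Rcal(H)'$ suffices and follows directly from the Weyl relations \eqref{eqn:Weylrelations}. By \eqref{eqn:Weylrelations},
\begin{align*}
W(\chi)=\e^{\i\alpha}W(h)W(\eta),\qquad W(\tilde\chi)=\e^{\i\beta}W(-h')W(\eta),
\end{align*}
where the phases $\alpha,\beta\in\Rbb$ are explicit. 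Writing $\Psi\coloneq W(\eta)\hat\Omega$, this gives $\Omega=\e^{\i\alpha}W(h)\Psi$ and $\tilde\Omega=\e^{\i\beta}W(-h')\Psi$. Since $W(\eta)$ preserves $\Rcal(H)$ by the first step, $\Psi$ is standard.

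\emph{Applying the lemmas.} Phases drop out of the modular objects: $S_{\e^{\i\beta}\tilde\Omega,\e^{\i\alpha}\Omega}=\e^{\i(\alpha+\beta)}S_{\tilde\Omega,\Omega}$, so $J$ acquires a phase and $K$ is unchanged. Apply Lemma~\ref{lem:uu'Modular} with $u=\one$, $u'=W(-h')$, $v=W(h)$, $v'=\one$ and base vector $\Psi$ (that is, $\tilde\Omega=\Omega=\Psi$ in the lemma). This yields
\begin{align*}
K_{\tilde\Omega,\Omega}=K_\Psi,\qquad J_{\tilde\Omega,\Omega}\propto W(-h')\,W(h)\,J_\Psi .
\end{align*}
Then Lemma~\ref{lem:UModular} with $u=W(\eta)$ gives
\begin{align*}
K_\Psi=W(\eta)K_{\hat\Omega}W(-\eta),\qquad J_\Psi=W(\eta)J_{\hat\Omega}W(-\eta).
\end{align*}
Since $\eta=\chi-h$, the formula for $K_{\tilde\Omega,\Omega}$ follows immediately.

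\emph{Main obstacle: the phases in $J$.} For $J$ we must combine $W(-h')W(h)W(\eta)J_{\hat\Omega}W(-\eta)$ and compare it with $W(\tilde\chi+h)J_{\hat\Omega}W(-\chi+h)$, where $-\chi+h=-\eta$. The factor $W(-h')W(h)W(\eta)$ is, up to phase, $W(-h'+h+\eta)=W(\tilde\chi+h)$, because $\eta-h'=\tilde\chi$. So the operators agree up to a unimodular constant. This constant must be tracked carefully, since $J$ is antilinear and phases do not simply cancel when moved through it. The paper's definition of $J_{\tilde\Omega,\Omega}$ is tied to the actual vectors $\Omega=W(\chi)\hat\Omega$ and $\tilde\Omega=W(\tilde\chi)\hat\Omega$, so I will not rely on cancellation but verify the phase directly. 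On the core $\Rcal(H)\Omega$, compute $S_{\tilde\Omega,\Omega}a\Omega=a^*\tilde\Omega$ using the factorisation, read off $S_{\tilde\Omega,\Omega}=W(\tilde\chi+h)S_{\hat\Omega}W(-\chi+h)$ with all phases combined, and then take the polar decomposition. I expect the Weyl-relation phases $\Im\langle\cdot,\cdot\rangle$ to cancel, using $\Im\langle h,h'\rangle=0$ for $h\in H$, $h'\in H'$, and the antilinearity of $S$, which conjugates one of the phases.
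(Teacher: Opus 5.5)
Your proposal is correct and is essentially the paper's argument. Both reduce the claim to Lemmas~\ref{lem:UModular} and~\ref{lem:uu'Modular} via the splitting $\chi-\tilde{\chi}=h+h'$ and Weyl-relation bookkeeping; the only difference is that you factor through $\Psi=W(\chi-h)\hat{\Omega}$, whereas the paper first conjugates by $W(\tilde{\chi})$ and then peels $W(h)W(h')$ off $W(-\tilde{\chi})W(\chi)\hat{\Omega}$.

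The phase you left open does vanish exactly, and in your own factorisation this is simple to confirm. With $\eta=\chi-h$ the three phases $\frac{1}{2}\Im\langle h,\eta\rangle$, $-\frac{1}{2}\Im\langle h',\eta\rangle$ and $-\frac{1}{2}\Im\langle h-h',\eta\rangle$ sum to zero, where the last one uses $\Im\langle h',h\rangle=0$. All three sit to the left of $S_{\hat{\Omega}}$, so antilinearity never enters, and you get $S_{\tilde{\Omega},\Omega}=W(\tilde{\chi}+h)S_{\hat{\Omega}}W(-\chi+h)$ on the nose.
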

\begin{proof}
It follows from \eqref{eq:Weylcommutator} that the adjoint action of $W(\chi)$ preserves a strongly dense sub-algebra of
$\Rcal(H)$ and therefore it preserves $\Rcal(H)$, $W(\chi)\Rcal(H)W(-\chi)=\Rcal(H)$. The same applies to $W(\tilde{\chi})$ and using Lemma \ref{lem:UModular} with 
$u=W(\tilde{\chi})$, the identity $W(-\tilde{\chi}) W(\chi) = e^{\frac{\i}{2}\Im\langle \tilde{\chi},\chi\rangle} W(h) W(h')$, and Lemma \ref{lem:uu'Modular} we find
\begin{align}
    S_{\tilde{\Omega},\Omega}
  &= W(\tilde{\chi}) S_{\hat{\Omega}, W(-\tilde{\chi}) W(\chi)\hat{\Omega}} W(-\tilde{\chi}) \notag\\
  &= e^{\frac{\i}{2}\Im\langle \tilde{\chi},\chi\rangle} W(\tilde{\chi}) W(h) S_{\hat{\Omega}} W(-h') W(-\tilde{\chi}) \notag\\
  &= e^{\frac{\i}{2} \Im\langle \tilde{\chi}, \chi - h\rangle} W(\tilde{\chi} + h) S_{\hat{\Omega}} e^{\frac{\i}{2} \Im\langle \tilde{\chi}, h' \rangle} W(-\tilde{\chi} - h') \notag\\
  &= W(\tilde{\chi} + h) S_{\hat{\Omega}} W(-\chi + h)
  \eqend{,}\notag
\end{align}
where we recall that $\chi-\tilde{\chi}=h+h'$ and $S_{\hat{\Omega}}$ is antilinear. Using the polar decomposition of
$S_{\tilde{\Omega},\Omega}$ and the spectral calculus the results now follow.
\end{proof}

If $\hat{\Omega}=\Omega_{\mathrm{F}}$ is the Fock vector, then the states $\Omega,\tilde{\Omega}$ are coherent. In that case, we can use the Fock space structure to further simplify the formula for the relative modular Hamiltonian in Proposition~\ref{prop:weylactionomodularoperators}.
\begin{corollary}\label{Cor:RelativeModularHamiltonian.CoherentStates}
Under the assumptions of Proposition \ref{prop:weylactionomodularoperators}, if $\hat{\Omega}=\Omega_{\mathrm{F}}$ and $\chi-h$ is in the domain of $K_H$, then
\begin{align*}
    K_{\tilde{\Omega}, \Omega}
  &= K_{\Omega_{\mathrm{F}}}
    + \frac12 \bigl\langle \chi - h, K_H (\chi - h) \bigr\rangle \one
    - \phi_{\textup{S}}\bigl( \i K_H (\chi - h) \bigr)
  \eqend{,}
\end{align*}
when acting on finite particle vectors in the domain of $K_{\Omega_{\mathrm{F}}}$.
\end{corollary}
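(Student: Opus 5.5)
Set $\xi \coloneq \chi - h$. By Proposition~\ref{prop:weylactionomodularoperators} with $\hat{\Omega}=\Omega_{\mathrm{F}}$ we already know $K_{\tilde{\Omega},\Omega}=W(\xi)K_{\Omega_{\mathrm{F}}}W(-\xi)$, where $K_{\Omega_{\mathrm{F}}}=\d\Gamma(K_H)$. So the statement reduces to a conjugation formula for a second-quantised generator by a Weyl operator:
\begin{align*}
W(\xi)\,\d\Gamma(K_H)\,W(-\xi)=\d\Gamma(K_H)+\tfrac12\langle\xi,K_H\xi\rangle\one-\phi_{\textup{S}}(\i K_H\xi)\eqend{.}
\end{align*}
The plan is to prove this on finite particle vectors $\psi$ in the domain of $\d\Gamma(K_H)$ by differentiating a one-parameter unitary identity.

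The basic input is the covariance relation $\Gamma(u)W(\eta)\Gamma(u)^*=W(u\eta)$ for unitary $u$. Take $u=e^{\i tK_H}$ and use the Weyl relations \eqref{eqn:Weylrelations}. This gives
\begin{align*}
W(\xi)\Gamma(e^{\i tK_H})W(-\xi)=W(\xi)W(-e^{\i tK_H}\xi)\Gamma(e^{\i tK_H})=e^{-\frac{\i}{2}\Im\langle\xi,-e^{\i tK_H}\xi\rangle}W\bigl(\xi-e^{\i tK_H}\xi\bigr)\Gamma(e^{\i tK_H})\eqend{.}
\end{align*}
The left-hand side is $e^{\i t\,W(\xi)\d\Gamma(K_H)W(-\xi)}$, a strongly continuous unitary group. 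We apply both sides to $\psi$ and differentiate at $t=0$, handling each factor on the right as follows.

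\begin{itemize}
\item \emph{Phase.} It is $e^{\frac{\i}{2}\Im\langle\xi,e^{\i tK_H}\xi\rangle}$. Since $\xi\in\mathrm{dom}(K_H)$, its derivative at $0$ is $\frac{\i}{2}\Re\langle\xi,K_H\xi\rangle=\frac{\i}{2}\langle\xi,K_H\xi\rangle$.
\item \emph{Weyl factor.} Set $h(t)\coloneq\xi-e^{\i tK_H}\xi$. This is continuous with $h(0)=0$ and $h'(0)=-\i K_H\xi$, which exists because $\xi\in\mathrm{dom}(K_H)$. Lemma~\ref{lem:strongderivative} then gives the derivative $-\i\phi_{\textup{S}}(\i K_H\xi)\psi$ on finite particle vectors.
\item \emph{Second-quantised factor.} Its derivative on $\psi$ is $\i\,\d\Gamma(K_H)\psi$, since $\psi\in\mathrm{dom}(\d\Gamma(K_H))$.
\end{itemize}

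The product rule needs care. The Weyl factor is not uniformly differentiable, and $\Gamma(e^{\i tK_H})\psi$ is a moving vector. We handle this by writing
\begin{align*}
W(h(t))\Gamma_t\psi-\psi=W(h(t))(\Gamma_t\psi-\psi)+(W(h(t))\psi-\psi)\eqend{.}
\end{align*}
After dividing by $t$, the first term converges because $W(h(t))\to\one$ strongly and the sequence is uniformly bounded. The second term converges by Lemma~\ref{lem:strongderivative}, and $\Gamma_t$ preserves particle number.

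Summing the three derivatives, $\psi$ lies in the domain of the generator $W(\xi)\d\Gamma(K_H)W(-\xi)$ by Stone's theorem. The generator applied to $\psi$ equals $\d\Gamma(K_H)\psi+\frac12\langle\xi,K_H\xi\rangle\psi-\phi_{\textup{S}}(\i K_H\xi)\psi$, which is the claim. The main obstacle is this domain and product-rule bookkeeping. The only other point to check is the sign in the Weyl relation, i.e.\ the convention $W(\chi)W(\xi)=e^{-\frac{\i}{2}\Im\langle\chi,\xi\rangle}W(\chi+\xi)$.
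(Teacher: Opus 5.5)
Your proposal is correct and follows essentially the same route as the paper: it combines $\Gamma(u)W(\eta)\Gamma(u)^*=W(u\eta)$ with the Weyl relations and differentiates at $t=0$ via Lemma~\ref{lem:strongderivative}. The only differences are that the paper takes the opposite sign of $\xi$ and arranges the identity as $W(-\xi)\Gamma(u)W(\xi)\Gamma(u)^*=e^{\frac{i}{2}\Im\langle\xi,u\xi\rangle}W((u-1)\xi)$; your product-rule splitting and the appeal to Stone's theorem, which places $\psi$ in the domain of the conjugated generator, make explicit a step the paper leaves implicit.
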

\begin{proof}
For any unitary $u$ on $\Hcal$ and any $\xi\in\Hcal$, we have $\Gamma(u)a^*(\xi)\Gamma(u)^*=a^*(u\xi)$. Using this identity and its adjoint we find $\Gamma(u)\phi_{\textup{S}}(\xi)\Gamma(u)^*=\phi_{\textup{S}}(u\xi)$ and hence
\begin{align}
\label{eqn:GammaAdjointW}
    \Gamma(u) W(\xi) \Gamma(u)^*
  &= W(u \xi)
  \eqend{.}
\end{align}
By the Weyl relations this yields
\begin{align*}
    W(-\xi) \Gamma(u) W(\xi) \Gamma(u)^*
  &= W(-\xi) W(u\xi)
  = \e^{\frac{\i}{2} \Im\langle\xi, u \xi\rangle}
    W\bigl( (u - \one) \xi \bigr)
  \eqend{.}
\end{align*}
Taking $u = \e^{\i t K}$ for a self-adjoint operator $K$ whose domain contains $\xi$, we may use Lemma \ref{lem:strongderivative} for $h(t) = (\e^{\i t K} - \one) \xi$ to differentiate w.r.t.~$t$ at $t = 0$, which yields
\begin{align}
\label{eqn:WdGamma}
    W(-\xi) \d\Gamma(K) W(\xi) - \d\Gamma(K)
  &= \frac12 \langle\xi, K \xi\rangle \one
    + \phi_{\textup{S}}(\i K \xi)
\end{align}
when acting on any finite particle vector in the domain of $\d\Gamma(K)$. Taking $K=K_H$ and $\xi=h-\chi$ then yields the result.
\end{proof}

Using similar arguments one may compute the relative entropy.
\begin{corollary}\label{Cor:RelativeEntropy.CoherentStates}
Under the assumptions of Proposition \ref{prop:weylactionomodularoperators}, if $\hat{\Omega}=\Omega_{\mathrm{F}}$ and $h$ is in the domain of $K_H$, then
\begin{align*}
    H(\Omega, \tilde{\Omega})
  &= \frac12 \langle h, K_H h\rangle
  \eqend{.}
\end{align*}
\end{corollary}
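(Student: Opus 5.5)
The plan is to reduce the relative entropy to an expectation value of $K_{\Omega_{\mathrm{F}}}=\d\Gamma(K_H)$ in a single coherent vector, and then evaluate that expectation value directly in Fock space. As in the proof of Proposition~\ref{prop:weylactionomodularoperators}, write $\Omega=W(\chi)\Omega_{\mathrm{F}}$ and $\tilde{\Omega}=W(\tilde{\chi})\Omega_{\mathrm{F}}$.

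\emph{Step 1: reduction.} Since $W(-\tilde{\chi})$ preserves $\Rcal(H)$, Lemma~\ref{lem:UModular} gives
\[
K_{\tilde{\Omega},\Omega}=W(\tilde{\chi})\,K_{\Omega_{\mathrm{F}},W(-\tilde{\chi})\Omega}\,W(-\tilde{\chi}).
\]
Hence $H(\Omega,\tilde{\Omega})=H(W(-\tilde{\chi})\Omega,\Omega_{\mathrm{F}})$. By the Weyl relations,
\[
W(-\tilde{\chi})\Omega=\e^{\i\theta}\,W(h)W(h')\Omega_{\mathrm{F}}
\]
for some real phase $\theta$. Here $W(h)\in\Rcal(H)$ and $W(h')\in\Rcal(H')\subset\Rcal(H)'$. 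We apply \eqref{eqn:REuu'2} with $u=u'=\one$, $v=\e^{\i\theta}W(h)$ and $v'=W(h')$; the phase cancels in the inner product. This yields
\[
H(\Omega,\tilde{\Omega})=\langle W(h)\Omega_{\mathrm{F}},\,\d\Gamma(K_H)\,W(h)\Omega_{\mathrm{F}}\rangle .
\]
This expression is understood as the spectral integral $\int\lambda\,\d\langle\psi,E_\lambda\psi\rangle$ with $\psi=W(h)\Omega_{\mathrm{F}}$, which is how the relative entropy \eqref{eqn:defRE} is defined when $\psi$ may lie outside the operator domain.

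\emph{Step 2: evaluation.} A formal computation via \eqref{eqn:WdGamma} with $\xi=h$ already gives the answer. Indeed $\d\Gamma(K_H)\Omega_{\mathrm{F}}=0$ and $\langle\Omega_{\mathrm{F}},\phi_{\textup{S}}(\i K_Hh)\Omega_{\mathrm{F}}\rangle=0$, so only $\tfrac12\langle h,K_Hh\rangle$ survives.

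To make this rigorous without knowing a priori that $\psi$ lies in the domain of $\d\Gamma(K_H)$, I would expand $\psi$ in particle number instead. The coherent vector has $n$-particle components
\[
\psi_n=\e^{-\|h\|^2/4}\,\frac{1}{\sqrt{n!}}\Bigl(\tfrac{\i h}{\sqrt2}\Bigr)^{\otimes n}.
\]
The operator $\d\Gamma(K_H)$ is reduced by each $\Hcal^{\otimes_sn}$, where it acts as $\sum_j\one\otimes\cdots\otimes K_H\otimes\cdots\otimes\one$. Therefore the spectral measure of $\psi$ is the sum of the spectral measures of the $\psi_n$. Since $h$ lies in the domain of $K_H$, each $\psi_n$ lies in the domain of $\d\Gamma(K_H)$, and
\[
\langle\psi_n,\d\Gamma(K_H)\psi_n\rangle=\e^{-\|h\|^2/2}\,\frac{n}{n!}\Bigl(\frac{\|h\|^2}{2}\Bigr)^{n-1}\frac{\langle h,K_Hh\rangle}{2}.
\]
Summing over $n$ gives $\e^{-\|h\|^2/2}\,\e^{\|h\|^2/2}\cdot\tfrac12\langle h,K_Hh\rangle=\tfrac12\langle h,K_Hh\rangle$, which is the claimed formula.

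\emph{Main obstacle.} The delicate point is justifying the summation, i.e.\ that the expectation defined through the spectral measure of $\psi$ really equals this series. The integrand $\lambda$ has no fixed sign, so I need to check that the series is absolutely convergent. This works by applying the same computation to $|\d\Gamma(K_H)|\le\d\Gamma(|K_H|)$ on each $n$-particle space, using that $h$ lies in the domain of $|K_H|^{1/2}$. The positive and negative parts are then each finite, and the result follows.
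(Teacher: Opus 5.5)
Your proof is correct. Step~1 is the paper's reduction in substance: the paper reads off $H(\Omega,\tilde{\Omega})=\langle W(h)\Omega_{\mathrm{F}},K_{\Omega_{\mathrm{F}}}W(h)\Omega_{\mathrm{F}}\rangle$ directly from the formula $K_{\tilde{\Omega},\Omega}=W(\chi-h)K_{\Omega_{\mathrm{F}}}W(h-\chi)$ of Proposition~\ref{prop:weylactionomodularoperators}. That formula is itself derived from Lemmas~\ref{lem:UModular} and~\ref{lem:uu'Modular}, used the same way you use them.

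The routes differ in Step~2. The paper stops at what you call the formal computation. It applies the operator identity \eqref{eqn:WdGamma} with $\xi=h$ to the finite-particle vector $\Omega_{\mathrm{F}}$, and uses $K_{\Omega_{\mathrm{F}}}\Omega_{\mathrm{F}}=0$ together with the vanishing vacuum expectation of $\phi_{\textup{S}}(\mathrm{i}K_Hh)$. All the analysis sits in the derivation of \eqref{eqn:WdGamma} via Lemma~\ref{lem:strongderivative}. The paper needs that identity anyway for Corollary~\ref{Cor:RelativeModularHamiltonian.CoherentStates}, so the entropy formula comes almost for free, alongside the corresponding operator statement.

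Your explicit expansion of the coherent vector into $n$-particle components avoids \eqref{eqn:WdGamma} and Lemma~\ref{lem:strongderivative} altogether. It also makes explicit what the expectation value of the unbounded operator $\mathrm{d}\Gamma(K_H)$ in $W(h)\Omega_{\mathrm{F}}$ means. Your bound $|\mathrm{d}\Gamma(K_H)|\le\mathrm{d}\Gamma(|K_H|)$ on each $n$-particle space is valid: the $n$ copies of $K_H$ commute strongly, so it follows from the joint functional calculus.

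A small bonus of your route is that it only uses $\langle h,|K_H|h\rangle<\infty$. It therefore extends to $h$ in the form domain of $K_H$ (the domain of $|K_H|^{1/2}$), with $\langle h,K_Hh\rangle$ read as a quadratic form and each $n$-particle expectation computed as a form value. The paper's route, by contrast, needs $h$ in the operator domain of $K_H$ for $\phi_{\textup{S}}(\mathrm{i}K_Hh)$ to be defined.
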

\begin{proof}
From Proposition \ref{prop:weylactionomodularoperators} and the Weyl relations \eqref{eqn:Weylrelations}, we find
\begin{align*}
    H(\Omega, \tilde{\Omega})
  &= \bigl\langle
      W(h - \chi) \Omega,
      K_{\Omega_{\mathrm{F}}} W(h - \chi) \Omega
    \bigr\rangle
  = \bigl\langle
      \Omega_{\mathrm{F}},
      W(-h) K_{\Omega_{\mathrm{F}}} W(h) \Omega_{\mathrm{F}}
    \bigr\rangle
  \eqend{.}
\end{align*}
Because $h$ is in the domain of $K_H$ and $K_{\Omega_{\mathrm{F}}}\Omega_{\mathrm{F}}=0$, we may apply the identity \eqref{eqn:WdGamma} to find the result. (The fact that $H(\Omega,\tilde{\Omega})\ge0$ is not clear from this formula, but follows from passivity \cite{LongoMorsella2023,PuszWoronowicz}.)
\end{proof}

More generally, \cite[Thm.~4.5]{Ciolli2020} determines the relative entropy of any two standard coherent states by approximating them by states to which Proposition \ref{prop:weylactionomodularoperators} applies and by using the lower semi-continuity of the relative entropy \cite{Araki1975,Kosaki1986}. The result is then well-defined, but may be infinite. A further generalisation to relative entropies for non-standard states can be found in \cite{Bostelmann2022}. Note that the approximation argument may not apply to the relative modular Hamiltonian as an operator. In such cases it may be useful to consider estimates.

\subsection{Estimates for coherent states}\label{sec:generalcoherentestimates}

In order to apply our estimates to coherent states, we fix two standard subspaces $H_i\subset\Hcal$ with $H_1\subset H_2$, the reference state $\hat{\Omega}=\Omega_{\mathrm{F}}$, which is standard for both $\Rcal(H_i)$, and two coherent states, $\Omega=W(\chi)\Omega_{\mathrm{F}}$ and $\tilde{\Omega}=W(\tilde{\chi})\Omega_{\mathrm{F}}$ with $\chi,\tilde{\chi}\in\Hcal$. Furthermore, we assume that
$\chi=\chi_1+\chi_2$ and $\tilde{\chi}=\tilde{\chi}_1+\tilde{\chi}_2$ with $\chi_2,\tilde{\chi}_2\in H_2$ and
$\chi_1,\tilde{\chi}_1\in H_1'=(\i H_1)^{\perp_r}$, the symplectic adjoint of $H_1$. Then we can decompose the Weyl operator
\begin{align}
\label{eqn:Weyldecompose}
    W(\chi)
  &= \e^{\frac{\i}{2} \Im\langle\chi_1, \chi_2\rangle}
    W(\chi_1) W(\chi_2)
\nexteq\notag
  &= \e^{-\frac{\i}{2} \Im\langle\chi_1, \chi_2\rangle} W(\chi_2) W(\chi_1)
\end{align}
and similarly for $W(\tilde{\chi})$. The Weyl operators $W(\chi_i)$ are unitaries in $\Rcal(H_i)$, so we can combine Theorem \ref{Prop:GeneralEstimate} with Equation \eqref{eqn:WdGamma} to find
\begin{align*}
    K^{(1)}_{\tilde{\Omega}, \Omega}
  &\leq W(\chi_1 + \tilde{\chi}_2) K^{(2)}_{\Omega_{\mathrm{F}}} W(-\chi_1 - \tilde{\chi}_2)
\nexteq
  &= K^{(2)}_{\Omega_{\mathrm{F}}}
    + \frac12 \bigl\langle \chi_1 + \tilde{\chi}_2, K_{H_2}(\chi_1 + \tilde{\chi}_2) \bigr\rangle
    - \phi_{\textup{S}}\bigl( \i K_{H_2}(\chi_1 + \tilde{\chi}_2) \bigr)
\nexteq
    K^{(2)}_{\tilde{\Omega}, \Omega}
  &\geq W(\chi_1 + \tilde{\chi}_2) K^{(1)}_{\Omega_{\mathrm{F}}} W(-\chi_1 - \tilde{\chi}_2)
\nexteq
  &= K^{(1)}_{\Omega_{\mathrm{F}}}
    + \frac12 \bigl\langle \chi_1 + \tilde{\chi}_2, K_{H_1}(\chi_1 + \tilde{\chi}_2) \bigr\rangle
    - \phi_{\textup{S}}\bigl( \i K_{H_1}(\chi_1 + \tilde{\chi}_2) \bigr)
  \eqend{,}
\end{align*}
where the second, resp.~fourth, line holds if $\chi_1+\tilde{\chi}_2$ is in the domain of $K_{H_2}$, resp.~$K_{H_1}$, and the expression acts on a finite particle vector in the domain of $K^{(2)}_{\Omega_{\mathrm{F}}}$, resp,~$K^{(1)}_{\Omega_{\mathrm{F}}}$. If $\chi_1-\tilde{\chi}_1\in H_2'$, then we may apply Corollary \ref{Cor:RelativeModularHamiltonian.CoherentStates} to see that the right-hand side of the first estimate is $K^{(2)}_{\tilde{\Omega},\Omega}$. Similarly, if $\chi_2-\tilde{\chi}_2\in H_1$, then the right-hand side of the second estimate is $K^{(1)}_{\tilde{\Omega},\Omega}$.

For the relative entropy we find in a similar way
\begin{align*}
    H^{(1)}(\Omega,\tilde{\Omega})
  &\leq \frac12 \bigl\langle \chi_2 - \tilde{\chi}_2, K_{H_2}(\chi_2 - \tilde{\chi}_2) \bigr\rangle
\nexteq
    H^{(2)}(\Omega,\tilde{\Omega})
  &\ge q \frac12 \bigl\langle \chi_2 - \tilde{\chi}_2, K_{H_1}(\chi_2 - \tilde{\chi}_2) \bigr\rangle
\end{align*}
if $\chi_2-\tilde{\chi}_2$ is in the domain of $K_{H_2}$ (for the upper bound), resp.~$K_{H_1}$ (for the lower bound). If $\chi_1-\tilde{\chi}_1\in H_2'$, resp.~$\chi_2-\tilde{\chi}_2\in H_1$, then we may apply Corollary \ref{Cor:RelativeEntropy.CoherentStates} to see that the right-hand side of the first, resp.~second, estimate is $H^{(2)}(\Omega,\tilde{\Omega})$, resp.~$H^{(1)}(\Omega,\tilde{\Omega})$.

\section{Coherent excitations in free scalar QFT}\label{sec:coherentscalarQFT}

To investigate the quality of our estimates we will now consider the simplest toy model, the free scalar field in the Minkowski vacuum representation.

In $(d + 1)$-dimensional Minkowski spacetime with signature $-+\cdots+$ we choose the inertial time $0$ surface $\Rbb^d$ as a Cauchy surface. To the Klein-Gordon equation $(-\Box+m^2)\phi=0$ of mass $m\ge0$ we associate the self-adjoint operator $A \coloneq -\Laplace+m^2$ on $L^2(\Rbb^d)$, where $\Laplace$ is the Laplacian. If $m>0$ or $d>1$ we define the real Sobolev spaces
$H^{\left(\pm1/2\right)}(\Rbb^d,\Rbb)$ as the completions of $C_0^{\infty}(\Rbb^d,\Rbb)$ with the inner product given by
$\frac12\langle f,A^{\pm1/2}g\rangle$. We take the one-particle Hilbert space to be
\begin{align}
\Hcal&\coloneq H^{\left(\frac12\right)}(\Rbb^d,\Rbb)\oplus H^{\left(-\frac12\right)}(\Rbb^d,\Rbb)\label{eqn:Hvac}
\end{align}
as a real Hilbert space, endowed with the complex structure $i=\left(\begin{smallmatrix} 0 & -A^{-1/2} \\ A^{1/2} & 0\end{smallmatrix}\right)$.
We may interpret elements of $\Hcal$ as $(f_0,f_1)=\pi(f_0)\hat{\Omega}-\varphi(f_1)\hat{\Omega}$, where $(\varphi,\pi)$ are the distributional initial data of the quantum field $\phi$ and $\hat{\Omega}$ is the Minkowski vacuum state.

It is sometimes convenient to use a covariant formulation in spacetime. To this end we can use the advanced ($-$) and retarded ($+$) fundamental solutions $E^{\pm}$ of the Klein-Gordon operator and write $E=E^--E^+$. We may then define a continuous linear map $D:C_0^{\infty}(\Rbb^{d+1},\Rbb)\to C_0^{\infty}(\Rbb^d,\Rbb)\oplus C_0^{\infty}(\Rbb^d,\Rbb)$ by setting
$Df \coloneq (Ef|_{x_0=0},\partial_0Ef|_{x_0=0})$. We may interpret $Df\in\Hcal$ as $Df=\phi(f)\hat{\Omega}$.

To an open region $O\subset\Rbb^d$ we associate the closed real-linear subspace $H_O\subset\Hcal$ defined by
\begin{align}
H_O&\coloneq \overline{C_0^{\infty}(O,\Rbb)\oplus C_0^{\infty}(O,\Rbb)}\notag
\end{align}
and the von Neumann algebra $\Rcal(H_O)$ on the Fock space $\Fcal_+(\Hcal)$.
Using the Reeh-Schlieder theorem one may show that $H_O\subset\Hcal$ is a standard subspace if $O$ and $\Rbb\setminus\overline{O}$ are both non-empty open sets \cite[Prop.~2.7]{FiglioliniGuido:1989}. Equivalently, $H_O$ can be associated to the spacetime region $V \coloneq D(O)$ that is the domain of dependence of $O$.

\subsection{Reference modular Hamiltonians}
\label{ssec:relmodhamiltonianscoherentstatesfreescalarfields}

In Section \ref{ssec:relmodhamiltonianscoherentstates} we gave expressions for the relative modular Hamiltonian between two coherent states whose one-particle vectors had a difference in $H+H'$ for a standard subspace $H$ and its symplectic complement $H'$. These relative modular Hamiltonians were expressed in terms of the modular Hamiltonian of a reference state. In this section we will review two key examples for free scalar fields, where the modular Hamiltonian is explicitly known.

\begin{example}[Rindler wedge]
The one-particle modular flow of the Rindler wedge in the $x_1$-direction is given by the boost transformation $\Lambda_1^{-2 \pi t}$ \cite{BisognanoWichmann:1975}. Here the base region $R$ is the half space with coordinates $(x^1, \mathvec{x}_{\perp})$, where $x^1 > 0$ and $\mathvec{x}_{\perp}$ denotes the perpendicular spatial coordinates.
The modular flow acting on any element of $\Hcal$ of the form $Df$ with $f\in C_0^{\infty}(\Rbb^{d+1},\Rbb)$ is
\begin{align*}
    \Delta_{H_R}^{\i t} D f(x)
  &= Df\bigl( \Lambda_1^{2 \pi t} x \bigr)
  \eqend{,}
\nexteq
    \Lambda_1^{s} (x^0, x^1, \mathvec{x}_{\perp})
  &= \left(
      x^0 \cosh(s) + x^1 \sinh(s),
      x^0 \sinh(s) + x^1 \cosh(s),
      \mathvec{x}_{\perp}
    \right)
  \eqend{.}
\end{align*}
The relative modular Hamiltonian at one-particle level is given by
\begin{align}
K_{H_R}f&=2\pi\begin{pmatrix}A^{-\frac12}(x^1A-\partial_1)&0\\ 0&A^{\frac12}x^1\end{pmatrix}f\label{eqn:RMHWedge}
\end{align}
for any $f=(f_0,f_1)\in C_0^{\infty}(\Rbb^d,\Rbb)\oplus C_0^{\infty}(\Rbb^d,\Rbb)$. If $f$ is supported away from $\{x^1=0\}=\partial R$, then we find from Corollary \ref{Cor:RelativeEntropy.CoherentStates} that
\begin{align}
    H(W(f)\hat{\Omega}, \hat{\Omega})
  &= \frac{\pi}{2} \int_{x^1 > 0} \left[
      f_0(x^1 A - \partial_1) f_0 + x^1 f_1^2
    \right] \id[d]x
\nnexteq
\label{eqn:REWedge}
  &= \frac{\pi}{2} \int_{x^1 > 0}
      x^1 \left( |\nabla f_0|^2 + m^2 f_0^2 + f_1^2 \right)
    \id[d]x
\nexteq
  &= \pi \int_{x^1 > 0}
       x^1 \left\langle \phi_{\mathrm{S}}(f) \hat{\Omega}, T_{00}(x)|_{x^0 = 0} \phi_{\mathrm{S}}(f) \hat{\Omega} \right\rangle
    \id[d]x
  \eqend{,}
\notag
\end{align}
where $T_{00}(x)$ is the normal-ordered energy density, cf.~\cite{Ciolli2020} for a derivation of this result for more general $f$.
\end{example}

\begin{example}[Massless scalar field on the double cone]
Similarly, we can consider the massless scalar field for a double cone in $(d+1)$-dimensional Minkowski space with base
$B = \{\|\mathvec{x}\| < r\}$ a ball of radius $r > 0$. The modular flow is known to be related to that of the Rindler wedge via a conformal mapping \cite{HislopLongo1982} and the flow acts as \cite[Sec.~V.4.2]{haag2012local},
\begin{align}
\label{eq:DoubleCone.ConformalTransformedTestFunction}
    \Delta_{H_B}^{\i t} D f(x)
  &= D N(2\pi t,x)^{\frac{1-d}{2}}f(T^{2 \pi t} x)
  \eqend{,}
\nexteq
    T^s (x^0, \mathvec{x})
  &= N(s, x^0, \mathvec{x})^{-1} \left(
      x^0 \cosh(s)
      + \frac{r^2 - x^2}{2r} \sinh(s),
      \mathvec{x}
    \right)
  \eqend{,}
\nnexteq
    N(s, x^0, \mathvec{x})
  &\coloneq \frac{x^0}{r} \sinh(s)
    + \frac{r^2 - x^2}{2r^2} \cosh(s)
    + \frac{r^2 + x^2}{2r^2}
  \eqend{,}
\notag
\end{align}
where we wrote $x^2 = -(x^0)^2 + \|\mathvec{x}\|^2$ and the Jacobian factor $N(t,x)^{\frac{1-d}{2}}$ is needed to preserve solutions.
The relative modular Hamiltonian at one-particle level is given by
\begin{align}
K_{H_B}f&=2\pi\begin{pmatrix}A^{-\frac12}\left(\frac{1}{r}x^j\partial_j-\frac{r^2-\|\mathvec{x}\|^2}{2r}\Delta+\frac{d-1}{2r}\right)&0\\ 0&A^{\frac12}\frac{r^2-\|\mathvec{x}\|^2}{2r}
\end{pmatrix}f\label{eqn:RMHDoubleCone}
\end{align}
for any $f=(f_0,f_1)\in C_0^{\infty}(\Rbb^d,\Rbb)\oplus C_0^{\infty}(\Rbb^d,\Rbb)$. If $f$ is supported away from $\{\|\mathvec{x}\|=r\}=\partial B$, then we find from Corollary \ref{Cor:RelativeEntropy.CoherentStates} that
\begin{align}
    H(W(f)\hat{\Omega}, \hat{\Omega})
  &= \frac{\pi}{2} \int_B \left[
      f_0
      \left( -\nabla^j \frac{r^2 - \|\mathvec{x}\|^2}{2 r} \nabla_j + \frac{d - 1}{2 r} \right)
      f_0
      + \frac{r^2 - \|\mathvec{x}\|^2}{2 r}
      f_1^2
    \right] \id[d]x
\nnexteq
\label{eqn:REDoubleCone}
  &= \frac{\pi}{2} \int_B \left[
      \frac{r^2 - \|\mathvec{x}\|^2}{2 r}
      \bigl( |\nabla f_0|^2 + f_1^2 \bigr)
      + \frac{d - 1}{2 r} f_0^2
    \right] \id[d]x
    \eqend{.}
\end{align}
We refer to \cite{LongoMorsella2023} for more details on the massless modular Hamiltonian.
\end{example}

\subsection{Estimates for free scalar fields}\label{ssec:relentropyboundscoherentstates}

We now want to apply Theorem \ref{Prop:GeneralEstimate} to obtain upper and lower bounds on relative modular Hamiltonians. For this purpose we introduce three open regions $O_1\subset O_2\subset O_3\subset\Rbb^d$ in space such that
$\overline{O_i}\subset O_{i+1}$ for $i=1,2$ and $\Rbb^d\setminus\overline{O_3}\not=\emptyset$. The Minkowski vacuum
$\hat{\Omega}$ is then standard for the three local algebras $\Rcal_i \coloneq \Rcal(H_{O_i})$ and therefore also for any coherent states
$\Omega=W(f)\hat{\Omega}$ and $\tilde{\Omega}=W(\tilde{f})\hat{\Omega}$ with
$f,\tilde{f}\in \Hcal=H^{\left(\frac12\right)}(\Rbb^d,\Rbb)\oplus H^{\left(-\frac12\right)}(\Rbb^d,\Rbb)$. For simplicity, we will make the stronger assumption that $f,\tilde{f}\in C_0^{\infty}(\Rbb^d,\Rbb)\oplus C_0^{\infty}(\Rbb^d,\Rbb)$.

For $i=1,2$ we now choose cutoff functions $\eta_i\in C^{\infty}(\Rbb^d,\Rbb)$ with $\eta_i\equiv 1$ on $\overline{O_i}$ and
$\eta_i\equiv 0$ on $\Rbb^d \setminus \overline{O_{i+1}}$.
We will write $\eta_i f = (\eta_i f_0, \eta_i f_1)$, so
$\eta_i f \in C_0^{\infty}(\Reals^d, \Reals) \oplus C_0^{\infty}(\Reals^d, \Reals)$ and the Weyl relations \eqref{eqn:Weylrelations} yield
\begin{align*}
    W(f)
  &= \e^{\frac{\i}{2} \Im\langle f, \eta_i f \rangle} W\bigl( (1 - \eta_i) f \bigr) W(\eta_i f)
\nexteq
  &= \e^{-\frac{\i}{2} \Im\langle f, \eta_i f \rangle} W(\eta_i f) W\bigl( (1 - \eta_i) f \bigr)
\end{align*}
in analogy to \eqref{eqn:Weyldecompose}. As in Section \ref{sec:generalcoherentestimates} we then find
\begin{align*}
    W(g_1) K^{(1)}_{\hat{\Omega}} W(-g_1)
  &\leq K^{(2)}_{\tilde{\Omega}, \Omega}
  \leq W(g_2) K^{(3)}_{\hat{\Omega}} W(-g_2)
  \eqend{,}
\end{align*}
where we introduced $g_i \coloneq (1-\eta_i)f+\eta_i\tilde{f}$. Here the modular Hamiltonian of the vacuum is second quantised, wo we may write $K^{(j)}_{\hat{\Omega}} = \d\Gamma(k^{(j)})$ with the abbreviated notation
\begin{align}
k^{(j)}&\coloneq  K_{H_{O_j}}\eqend{.}\notag
\end{align}
If $g_1$ is in the domain of $k^{(1)}$ and $g_2$ in that of $k^{(3)}$ we may use \eqref{eqn:WdGamma} to find
\begin{align}
\label{eq:RelativeHamiltonians.Bounds}
    K^{(1)}_{\hat{\Omega}}
    + \frac12 \langle g_1, k^{(1)} g_1 \rangle
    - \phi_{\textup{S}}(\i k^{(1)} g_1)
  &\leq K^{(2)}_{\tilde{\Omega}, \Omega}
  \leq K^{(3)}_{\hat{\Omega}}
    + \frac12 \langle g_2, k^{(3)} g_2 \rangle
    - \phi_{\textup{S}}(\i k^{(3)} g_2)
  \eqend{.}
\end{align}

For the relative entropy $H^{(2)}(\Omega,\tilde{\Omega})=\langle\Omega,K^{(2)}_{\tilde{\Omega},\Omega}\Omega\rangle$ we similarly find
\begin{align*}
    \left\langle
      \hat{\Omega},
      W(h_1) K^{(1)}_{\hat{\Omega}} W(-h_1) \hat{\Omega}
    \right\rangle
  &\leq H^{(2)}(\Omega, \tilde{\Omega})
  \leq \left\langle
      \hat{\Omega},
      W(h_2) K^{(3)}_{\hat{\Omega}} W(-h_2) \hat{\Omega}
    \right\rangle
  \eqend{,}
\end{align*}
where $h_i \coloneq g_i-f=\eta_i(\tilde{f}-f)$. If we assume that $h_1$ is in the domain of $k^{(1)}$ and $h_2$ in that of $k^{(3)}$ this yields
\begin{align}
\label{eq:RelativeEntropy.Bounds}
    \frac{1}{2} \langle h_1, k^{(1)} h_1 \rangle
  &\leq H^{(2)}(\Omega, \tilde{\Omega})
  \leq \frac{1}{2} \langle h_2, k^{(3)} h_2 \rangle
  \eqend{.}
\end{align}
If the one-particle modular Hamiltonian can be expressed in terms of an energy density, this gives a kind of Bekenstein bound
\cite{Bekenstein1981,longo2025bekenstein,hollands2025bekenstein}.

For a more compact notation below, we denote the upper, resp.\@ lower, bound in \eqref{eq:RelativeEntropy.Bounds} by $H^+$, resp.\@ $H^-$, so that the difference $H^+ - H^-$ determines the estimate quality. Correspondingly, we write $\eta_+ = \eta_2$, $\eta_- = \eta_1$.
The coherent states are determined by $h_{\pm} \coloneq \eta_{\pm} g$ with $g = \tilde{f} - f = (g_0, g_1) \in \Cinf[0](\Reals^d, \Reals) \oplus \Cinf[0](\Reals^d, \Reals)$.

\subsection{Quality of our general estimates for relative entropies}\label{ssec:qualityofestimates}

To analyse the quality of the relative entropy estimates \eqref{eq:RelativeEntropy.Bounds}, we take the difference between the upper and the lower bound and investigate a suitable limit when the three regions $O_1 \subset O_2 \subset O_3$ are all wedges, 
resp.\@ double cones, for a massive, resp.\@ massless, free scalar field. In both cases we will make use of the following lemma.

\begin{lemma}
\label{lma:CutoffFunction.Integral}
Let $\Scal=\{\eta\in C^{\infty}(\Rbb,\Rbb)\mid \eta\equiv0\mathrm{\ on\ }x\le-1, \eta\equiv1\mathrm{\ on\ }x\ge1\}$ and
$E:\Scal\to\Rbb$ the linear map defined by
\begin{align}
\label{eq:CutoffFunction.Integral}
    E[\eta]
  &\coloneq \int_{-1}^{1} (x + 1) \eta'(x)^2 \id{x}
  \eqend{,}
\end{align}
where the prime denotes the derivative.
Then $\inf_{\eta\in\Scal}E[\eta]=0$.
\end{lemma}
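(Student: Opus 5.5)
Since $E[\eta]\ge0$ for every $\eta\in\Scal$, it suffices to construct a sequence $\eta_n\in\Scal$ with $E[\eta_n]\to0$. The weight $x+1$ vanishes at the left endpoint $x=-1$, and this degeneracy is what we will exploit. The model is the capacity of a point for a logarithmic-type weight. With $y=x+1\in[0,2]$, we must make the transition from $0$ to $1$ while keeping $\int_0^2 y\,\eta'(y)^2\,dy$ small. The Euler–Lagrange minimiser on $[\delta,2]$ solves $(y\eta')'=0$, i.e.\ $\eta=\log(y/\delta)/\log(2/\delta)$, with energy $1/\log(2/\delta)\to0$ as $\delta\to0$. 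This already shows the infimum is $0$ over piecewise smooth functions; what remains is a smoothing argument.

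First I fix a nonnegative bump. Choose $\rho\in C_0^\infty((0,1),\Rbb)$ with $\rho\ge0$ and $\int\rho=1$. For $n\ge 2$ set $\psi_n(t)\coloneq \rho(t)$ on a logarithmic scale: more precisely, define $\eta_n$ by
\begin{align*}
\eta_n'(x) \coloneq \frac{1}{\log n}\,\frac{1}{x+1}\,\sigma_n(x+1),
\end{align*}
where $\sigma_n$ is smooth, $0\le\sigma_n\le1$, supported in $[1/n,1]$, and chosen so that $\int_0^1 \sigma_n(y)\,dy/y=\log n$. The normalisation makes $\int\eta_n'=1$. Then put $\eta_n(x)\coloneq\int_{-1}^x\eta_n'$. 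This $\eta_n$ is smooth, vanishes on $x\le -1+1/n$, is constant equal to $1$ for $x\ge0$, and can be extended by $0$ and $1$ outside $[-1,1]$; hence $\eta_n\in\Scal$.

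The normalisation is arranged by letting $\sigma_n$ be $1$ on $[2/n,1/2]$ and smoothly ramping at both ends. The ramps contribute $O(1)$ to $\int\sigma_n\,dy/y$, so we rescale by a factor $c_n\to1$ (or equivalently adjust the inner edge). The energy is then
\begin{align*}
E[\eta_n]=\frac{1}{(\log n)^2}\int_0^1 \frac{\sigma_n(y)^2}{y}\,dy\le \frac{1}{(\log n)^2}\int_{1/n}^{1}\frac{dy}{y}=\frac{1}{\log n}\to0,
\end{align*}
up to the constant $c_n^2$. The only delicate point, and it is a mild one, is the normalisation of $\sigma_n$ while keeping smoothness. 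A cleaner way to handle it is to set $\sigma(y)\coloneq\tau(\log y)$ with a fixed smooth plateau function $\tau$ of width $L$ and normalise by $\int\tau$; the energy then scales like $1/\int\tau\sim1/L$. Letting $L\to\infty$ gives $\inf E=0$.
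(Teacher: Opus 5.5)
Your proof is correct, and its core is the same as the paper's: take the Euler--Lagrange profile $\eta'\propto (x+1)^{-1}$ and truncate it to an interval whose left end approaches the degenerate point $x=-1$. There the normalising constant diverges logarithmically, and the energy is essentially its inverse. Where you differ is the regularisation. The paper truncates sharply (the discontinuous $\chi_s$ on $|x|<s^{-1}$) and smooths by convolving with a rescaled mollifier $f_t$. It then needs dominated convergence as $t\to\infty$ to obtain the exact limit $\bigl(\log\frac{s+1}{s-1}\bigr)^{-1}$, and only afterwards sends $s\to1^+$. You instead multiply the profile by a smooth plateau $\sigma_n$, which puts $\eta_n$ in $\mathcal{S}$ immediately. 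It also gives an explicit, non-asymptotic bound with no double limit. Set $N_n:=\int_0^1\sigma_n(y)\,dy/y\ge\log(n/4)$ and $\eta_n'(x)=\sigma_n(x+1)/\bigl(N_n(x+1)\bigr)$; then $\sigma_n^2\le\sigma_n$ yields $E[\eta_n]\le 1/N_n$. Your variant $\sigma(y)=\tau(\log y)$ shows the mechanism most clearly. Under $u=\log(x+1)$ the functional becomes the unweighted Dirichlet energy $\int (d\eta/du)^2\,du$ on the half-line $u\le\log 2$, where a ramp of length $L$ costs about $1/L$. One small point to tidy: as you first state it, the requirement $\int_0^1\sigma_n\,dy/y=\log n$ is incompatible with $\sigma_n$ being smooth, $0\le\sigma_n\le1$ and supported in $[1/n,1]$, since equality would force $\sigma_n\equiv1$ there. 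So the rescaling by $c_n\to1$ (or moving the inner edge) that you add afterwards is necessary, not optional. Also, the sentence introducing $\rho$ and $\psi_n$ is a leftover that plays no role.
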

\begin{proof}
It is clear that $E[\eta] \ge 0$ for all $\eta \in \Scal$. If we view the functional $E[\eta]$ as a Lagrangian, then any extremal value is attained at a solution of the Euler-Lagrange equation $-\partial_x(x+1)\partial_x\eta=0$ on the interval $(-1,1)$, i.e., $\eta'(x) = c (x + 1)^{-1}$ for some $c \in \Reals$. Unfortunately, $E[\eta]$ would be infinite for any such solution with $c\not=0$. Nevertheless, we will show that we can approximate such a solution by functions that make the value of $E$ arbitrarily small. 

\begin{figure}
  \centering
  \includegraphics{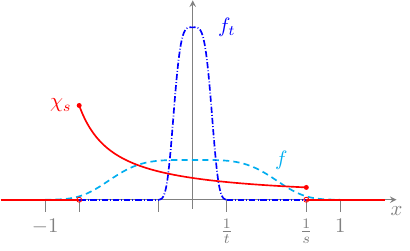}
  \caption{\label{fig:Cutoff.Infimum} The functions $\chi_s$ (red solid curve), $f$ (cyan dashed curve) and $f_t$ (blue dash-dotted curve) used in the proof of Lemma \ref{lma:CutoffFunction.Integral}.}
\end{figure}
We choose a parameter $s > 1$ to restrict the extremal function $\eta'(x)$ to the interval $[-s^{-1}, s^{-1}]$ and normalise it with the constant $c_s = \log(s + 1) - \log(s - 1)$.
I.e., we set $\chi_s(x) \coloneq \bigl( c_s (x + 1) \bigr)^{-1}$ for $|x| < s^{-1}$ and
$\chi_s \equiv 0$ for $|x| \ge s^{-1}$, which yields $\int \chi_s(x) \id{x} = 1$, cf.~Figure~\ref{fig:Cutoff.Infimum}.
Consider any positive function $f \in \Cinf[0](\Reals, \Reals)$ supported in $[-1, 1]$ and normalised, $\int f(x) \d{x} = 1$, which we rescale to $f_t(x) \coloneq t f(t x)$ with a second parameter $t$ such that $t^{-1} \leq 1 - s^{-1}$ or, equivalently, $t \geq \frac{s}{s - 1}$. Then $f_t$ is a smooth function supported in $[- t^{-1}, t^{-1}]$ and the convolution $\chi_s * f_t$ is a smooth function supported in $[-1, 1]$. All of these functions are normalised by construction. We now define
\begin{align*}
    \eta_{s,t}(x)
  &\coloneq \int_{-\infty}^x (\chi_s*f_t)(y) \id{y}
  \eqend{,}
\end{align*}
which is an element in $\Scal$, and we consider
\begin{align*}
    E[\eta_{s,t}]
  &= \int_{-1}^1 \iint_{\Reals^2} (x + 1)
      \chi_s\left( x - \frac{y}{t} \right) f(y)
      \chi_s\left( x - \frac{z}{t} \right) f(z)
    \id{z} \id{y} \id{x}
  \eqend{.}
\end{align*}
Because $\chi_s$ is continuous at $x \neq \pm s^{-1}$, the integrand converges pointwise to $(x + 1) \chi_s(x)^2 f(y) f(z)$ as $t \to \infty$ at any $x\not=\pm s^{-1}$.
Moreover, as $\|\chi_s\|_{\infty}=\frac{s}{c_s(s-1)}$ the integrand is bounded by $\frac{2 s^2}{c_s^2(s-1)^2}f(y)f(z)$, which is integrable on
$[-1,1]\times\Rbb^2$, so by the Dominated Convergence Theorem and the normalisation of all the functions, we find
\begin{align*}
    \lim_{t \to \infty} E[\eta_{s, t}]
  &= \int_{-1}^1 (x + 1) \chi_s(x)^2
      \iint_{\Reals^2} f(y) f(z)
      \id{z} \id{y}
    \id{x}
\nexteq
  &= \int_{-1}^1 (x + 1) \chi_s(x)^2 \id{x}
\nexteq
  &= \frac{1}{c_s}
    \int_{-1}^1 \chi_s(x) \id{x}
\nexteq
  &= \left( \log\frac{s + 1}{s - 1} \right)^{-1}
  \eqend{.}
\end{align*}
The result vanishes as $s \to 1^+$, so we can make $E[\eta_{s, t}]$ as small as we like.
\end{proof}
Note that for any $\eta_+ \in \mathcal{S}$ the function $\eta_-(x) \coloneq 1 - \eta_+(-x)$ is also in $\mathcal{S}$, so that the lemma also holds when we replace $\eta'(x)$ by $\eta'(-x)$ in \eqref{eq:CutoffFunction.Integral}.
The derivative $\eta'$ of any function $\eta \in \mathcal{S}$ is a mollifier, i.e.\@ it has compact support in $[-1, 1]$, is normalised $\int_{-1}^{1} \eta'(x) \id{x} = 1$,
and the functions $\eta_{\varepsilon}'(x) \coloneq \frac{1}{\varepsilon} \eta'\left( \frac{x}{\varepsilon} \right)$ are a representation of the Dirac distribution in the limit $\varepsilon \to 0^+$.

\subsubsection{Nested Rindler wedges}
\label{sec:RelativeEntropy.EnergyDensity.RightWedges}

As a first example, we consider coherent states over nested Rindler wedge regions $W=D(R)$, $W_{\pm}=D(R_{\pm})$ with bases $R=\{x^1>0\}$, $R_{\pm} \coloneq \{x^1 - x_{\pm} > 0\}$ illustrated in Figure~\ref{fig:NestedRightWedges}.
\begin{figure}
  \centering
  \includegraphics{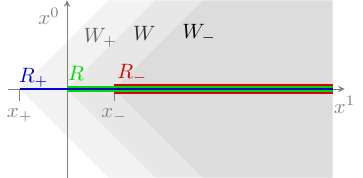}
  \caption{\label{fig:NestedRightWedges} Illustration of the nested wedge regions $W_-\subset W\subset W_+$ and their respective half spaces $R_- \subset R \subset R_+$ in the Cauchy hypersurface $\Reals^d$. Here, only the $x^0,x^1$ directions are shown and perpendicular spatial coordinates are suppressed.}
\end{figure}
It will be convenient to introduce the corresponding weight functions
\begin{align}
\label{eq:RightWedges.WeightFunctions}
    \beta(\mathvec{x})
  &= x^1
  \eqend{,}
&
    \beta_{\pm}(\mathvec{x})
  &\coloneq x^1 - x_{\pm}
  \eqend{.}
\end{align}

The upper and lower bounds $H^{\pm}$ in \eqref{eq:RelativeEntropy.Bounds} can then be expressed using \eqref{eqn:RMHWedge} as
\begin{align}
\label{eq:RightWedges.RelativeEntropy.Bounds}
    H^{\pm}
  &= \frac{\pi}{2} \int_{R_+}
      \beta_{\pm} \Bigl[
        (\nabla \eta_{\pm} g_0)^2
        + m^2 \eta_{\pm}^2 g_0^2
        + \eta_{\pm}^2 g_1^2
      \Bigr]
    \id[d]{x}
\nnexteq
  &\pickindent{=} \frac{\pi}{2} \int_{R_+}
      \beta_{\pm} \eta_{\pm}^2
      \Bigl[
        (\nabla g_0)^2
        + m^2 g_0^2
        + g_1^2
      \Bigr]
    \id[d]{x}
    + \frac{\pi}{2} \int_{R_+ \setminus R_-}
      \beta_{\pm} \eta_{\pm} (\nabla \eta_{\pm})
      \cdot (\nabla g_0^2)
    \id[d]{x}
  \eqbreakr
    + \frac{\pi}{2} \int_{R_+ \setminus R_-}
      \beta_{\pm} (\nabla \eta_{\pm})^2
      g_0^2
    \id[d]{x}
  \eqend{,}
\end{align}
where we have restricted the integrals to $R_+$ or $R_+ \setminus R_-$, because $\supp \eta_{\pm} \subset R_+$ and $\supp \nabla \eta_{\pm} \subset R_+ \setminus R_-$.
We also simplified the expression with the identity $2 g_0 \nabla g_0 = \nabla g_0^2$.

\begin{theorem}
\label{thm:RelativeEntropy.Bounds.RightWedges}
For any $f, \tilde{f} \in C_0^{\infty}(\Reals^d, \Reals) \oplus C_0^{\infty}(\Reals^d, \Reals)$, the relative entropy between the coherent excitations $\Omega = W(f) \hat{\Omega}$ and $\tilde{\Omega} = W(\tilde{f}) \hat{\Omega}$ on the Rindler wedge $W = D(R)$ is 
\begin{align}
\label{eq:RightWedges.RelativeEntropy}
    H(\Omega, \tilde{\Omega})
  &= \frac{\pi}{2} \int_{R}
      \beta \Bigl[
        (\nabla g_0)^2
        + m^2 g_0^2
        + g_1^2
      \Bigr]
    \id[d]{x}
\end{align}
with $(g_0,g_1):=\tilde{f}-f$, which is approximated arbitrarily well by $H^+$ from above and $H^-$ from below.
\end{theorem}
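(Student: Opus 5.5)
The proof rests on the sandwich \eqref{eq:RelativeEntropy.Bounds}: $H^-\le H(\Omega,\tilde{\Omega})\le H^+$ holds for every admissible choice of $x_-<0<x_+$ and of cutoffs $\eta_\pm$. It therefore suffices to show two things. First, $H^+-H^-$ can be made arbitrarily small. Second, both $H^\pm$ converge to the right-hand side of \eqref{eq:RightWedges.RelativeEntropy}, which I denote $H_0$. I would prove the stronger statement that $|H^\pm-H_0|$ can each be made arbitrarily small; the theorem then follows because $H(\Omega,\tilde{\Omega})$ lies between $H^-$ and $H^+$.

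Concretely, I would take $x_\pm=\pm\varepsilon$ and cutoffs depending only on $x^1$. Set $\eta_+(\mathvec{x})=\eta((x^1)/\varepsilon)$ and $\eta_-(\mathvec{x})=\eta_*((x^1)/\varepsilon)$, with $\eta,\eta_*\in\Scal$ and $\eta_*(x)=1-\eta(-x)$ as in the remark after Lemma~\ref{lma:CutoffFunction.Integral}. These vanish for $x^1\le -\varepsilon$, equal $1$ for $x^1\ge\varepsilon$, and satisfy the support requirements relative to $R_\pm=\{x^1>\pm\varepsilon\}$ (up to the harmless relabelling of which wedges are $O_1\subset O_2\subset O_3$). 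I would note that $h_\pm=\eta_\pm g$ are compactly supported smooth functions and lie in the domain of the wedge modular Hamiltonians. The formula \eqref{eqn:REWedge} applies to them, since $h_+$ and $h_-$ are supported in $\overline{R_+}$ resp. $\overline{R_-}$ and the weight $\beta_\pm$ vanishes on the boundary; if necessary I would shift slightly so they are supported away from the boundaries. This yields \eqref{eq:RightWedges.RelativeEntropy.Bounds}.

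I would then treat the three terms of \eqref{eq:RightWedges.RelativeEntropy.Bounds} separately as $\varepsilon\to0$.

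The first term is $\frac{\pi}{2}\int \beta_\pm\eta_\pm^2[\dots]$. Here $\beta_\pm\to\beta$ and $\eta_\pm^2\to\one_{x^1>0}$ pointwise off the hyperplane, and everything is bounded because $g$ is compactly supported. Dominated convergence then gives $H_0$.

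The second, cross term is bounded by $\frac{\pi}{2}\int_{|x^1|<\varepsilon}|\beta_\pm|\,|\partial_1\eta_\pm|\,|\nabla g_0^2|$. We have $|\beta_\pm|\le2\varepsilon$, and $\int|\partial_1\eta_\pm|\,dx^1=1$ because $\eta'\ge0$ for the constructed $\eta$; otherwise the total variation is bounded by a constant independent of $\varepsilon$. The term is therefore $O(\varepsilon)$.

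The third term is the main obstacle. After the substitution $x^1=\varepsilon y$ it becomes
\[
\frac{\pi}{2}\int_{\Rbb^{d-1}}\int_{-1}^{1}(y+1)\,\eta'(y)^2\, g_0(\varepsilon y,\mathvec{x}_\perp)^2\,dy\,d\mathvec{x}_\perp ,
\]
and similarly with $\eta_*$ and the weight $(y-1)$. The factors of $\varepsilon$ cancel exactly, so this term does not vanish by scaling. Since $g_0^2$ is bounded by $\|g_0\|_\infty^2$ and the $\mathvec{x}_\perp$-support is compact, it is bounded by $C\,E[\eta]$, resp. by the reflected functional for $\eta_*$, which equals $E[\eta]$ by the reflection remark. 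Lemma~\ref{lma:CutoffFunction.Integral} then lets me choose $\eta$ with $E[\eta]<\delta$ for arbitrary $\delta$. The term is nonnegative, so no sign issue arises.

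Combining the three terms gives $|H^\pm-H_0|\le C\delta+O(\varepsilon)$, which completes the proof. I would also double-check the domain condition needed for \eqref{eq:RelativeEntropy.Bounds}, but for compactly supported smooth $h_\pm$ this is routine.
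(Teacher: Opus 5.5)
\textbf{The cutoffs are not admissible, so the sandwich is not established.} Your three-term analysis of \eqref{eq:RightWedges.RelativeEntropy.Bounds} matches the paper's. The problem is the configuration you feed into it: for your choice, the inequality $H^-\le H(\Omega,\tilde{\Omega})\le H^+$, on which everything rests, does not follow from the scheme.

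The region of interest must be the middle region $O_2=R$. The upper bound needs a cutoff with $\eta_+\equiv1$ on $\overline{R}$, so that $(1-\eta_+)f\in H_R'$, i.e.\ $W((1-\eta_+)f)\in\Rcal(H_R)'$. The lower bound needs $\eta_-\equiv0$ on $\{x^1<0\}$, so that $\eta_- f\in H_R$, i.e.\ $W(\eta_- f)\in\Rcal(H_R)$.

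Your cutoffs $\eta(x^1/\varepsilon)$ and $\eta_*(x^1/\varepsilon)$ both have their transition layer $[-\varepsilon,\varepsilon]$ straddling $\partial R$, so neither condition holds. No relabelling of the wedges repairs this, because you built both cutoffs relative to the single pair $\{x^1>\varepsilon\}\subset\{x^1>-\varepsilon\}$ and skipped $R$.

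What Theorem~\ref{Prop:GeneralEstimate} actually gives for your choice is weaker. Your $H^+$ is an upper bound for the relative entropy on $\{x^1>\varepsilon\}$, and your $H^-$ is a lower bound for the relative entropy on $\{x^1>-\varepsilon\}$. By Lemma~\ref{lem:ModularEstimate}, these two quantities are respectively $\le$ and $\ge$ the relative entropy on $R$. So you get no two-sided control of $H(\Omega,\tilde{\Omega})$, unless you invoke continuity of relative entropy along nets of algebras, which is exactly what this argument is meant to avoid.

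\textbf{The fix.} Give each bound its own transition layer on the correct side of $\partial R$, as the paper does. Take the enclosing wedges $\{x^1>\mp2\varepsilon\}$, so $\beta_\pm=x^1\pm2\varepsilon$, and set $\eta_+(\mathvec{x})=\eta(x^1/\varepsilon+1)$ and $\eta_-(\mathvec{x})=\eta_*(x^1/\varepsilon-1)$. These transition on $[-2\varepsilon,0]$ and $[0,2\varepsilon]$ respectively.

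After substituting $x^1=\varepsilon y$ and shifting $u=y\pm1$, you recover exactly the integrals you wrote. The weights are $(u+1)\eta'(u)^2$ for $H^+$ and $(u-1)\eta_*'(u)^2$ for $H^-$. Hence the bound $|\beta_\pm|\le2\varepsilon$, the $O(\varepsilon)$ estimate of the cross term, and the bound $C\,E[\eta]$ on the third term all go through unchanged.

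\textbf{Two smaller points.} First, the third term of $H^-$ is nonpositive, not nonnegative: its weight $\beta_-$ is negative on the transition layer, as it must be for $H^-\le H$. This is harmless because you bound that term in absolute value. Second, you do not need the excitations to be supported away from the wedge boundaries. Estimate \eqref{eq:RelativeEntropy.Bounds} only requires $h_\pm$ to lie in the domain of the one-particle modular Hamiltonians, and the quadratic forms are computed directly from \eqref{eqn:RMHWedge}. Shifting the supports would in any case destroy the cutoff conditions above.
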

\proof
For the simplification of the bounds \eqref{eq:RightWedges.RelativeEntropy.Bounds}, we place the two bounding half spaces at equal distances $x_{\pm} = \mp 2 \varepsilon$, see also Figure~\ref{fig:NestedRightWedges}.
We construct corresponding cutoff functions $\chi_{\pm}$ that vary only in the $x^1$-direction from arbitrary $\tilde{\eta}_{\pm} \in \mathcal{S}$ as considered in \autoref{lma:CutoffFunction.Integral},
\begin{align}
\label{eq:RightWedges.RelativeEntropy.BoundsSymmetric.Cutoffs}
    \eta_{\pm}(\mathvec{x})
  &\coloneq \tilde{\eta}_{\pm}\left( \frac{x^1}{\varepsilon} \pm 1 \right)
  \eqend{.}
\end{align}

Since the weights \eqref{eq:RightWedges.WeightFunctions} and the cutoffs \eqref{eq:RightWedges.RelativeEntropy.BoundsSymmetric.Cutoffs} only vary with $x^1$, we integrate $g_0$ over all other spatial coordinates to get the function
\begin{align*}
    \tau_0(x^1)
  &\coloneq \int_{(x^2, x^3, \dots) \in \Reals^{d - 1}}
      g_0^2
    \id[d - 1]{x}
  \eqend{,}
\end{align*}
which is in $C_0^{\infty}(\Reals, \Reals)$.
We substitute $x = x^1 / \varepsilon$ in the second and third integrals of the bounds \eqref{eq:RightWedges.RelativeEntropy.Bounds} and get
\begin{align*}
    H^{\pm}
  &\pickindent{=} \frac{\pi}{2} \int_{R_+}
      \beta_{\pm} \eta_{\pm}^2
      \Bigl[
        (\nabla g_0)^2
        + m^2 g_0^2
        + g_1^2
      \Bigr]
    \id[d]{x}
    + \frac{\pi \varepsilon}{2} \int_{-2}^{2}
      (x \pm 2) \tilde{\eta}_{\pm}(x \pm 1) \tilde{\eta}_{\pm}'(x + 1)
      \tau_0'(\varepsilon x)
    \id{x}
  \eqbreakr
    + \frac{\pi}{2} \int_{-2}^{2}
      (x \pm 2) \tilde{\eta}_{\pm}'(x \pm 1)^2
      \tau_0(\varepsilon x)
    \id{x}
  \eqend{.}
\end{align*}
By the Dominated Convergence Theorem the first integral converges to \eqref{eq:RightWedges.RelativeEntropy} in the limit 
$\varepsilon \to 0^+$, because a dominating function is readily found and $\beta_{\pm} \to \beta$ and $\eta_{\pm}(\mathvec{x}) \to \theta(x^1)$ converge pointwise, where $\theta$ is the Heaviside step function.
Similarly, the second term converges to $0$ and the third term to 
\begin{align*}
    \frac{\pi}{2} \tau_0(0)
    \int_{-2}^{2}
      (x \pm 2) \tilde{\eta}_{\pm}'(x \pm 1)^2
    \id{x}
  &= \pm \frac{\pi}{2} \tau_0(0) \int_{-1}^{1}
      (y + 1) \tilde{\eta}_{\pm}'(\pm y)^2
    \id{y}
  \eqend{,}
\end{align*}
where we used the substitution $\pm y \coloneq x \pm 1$ and the support properties of $\tilde{\eta}_{\pm}'$.
This integral can be made arbitrarily small by Lemma~\ref{lma:CutoffFunction.Integral} and the remarks below it.
\qed

We note that the regularity assumptions on $f,\tilde{f}$ can be weakened considerably without affecting the proof.

\subsubsection{Nested double cones for massless fields}
\label{sec:RelativeEntropy.EnergyDensity.DoubleCones}

In the case of double cones, the arguments are very similar, but now the bases $B_- \subset B \subset B_+$ are $d$-dimensional balls with radii $0 < r_- < r < r_+$, for simplicity centred around the coordinate origin of $\Reals^d$.
The corresponding weight functions are
\begin{align}
\label{eq:DoubleCones.WeightFunctions}
    \beta(\mathvec{x})
  &\coloneq \frac{r^2 - \|\mathvec{x}\|^2}{2 r}
  \eqend{,}
&
    \beta_{\pm}(\mathvec{x})
  &\coloneq \frac{r_{\pm}^2 - \|\mathvec{x}\|^2}{2 r_{\pm}}
  \eqend{.}
\end{align}

The upper and lower bounds $H^{\pm}$ in \eqref{eq:RelativeEntropy.Bounds} can then be expressed using 
\eqref{eqn:RMHDoubleCone} as
\begin{align}
\label{eq:DoubleCones.RelativeEntropy.Bounds}
    H^{\pm}
  &= \frac{\pi}{2} \int_{B_+}
      \beta_{\pm}
      \bigl[
        (\nabla \eta_{\pm} g_0)^2
        + \eta_{\pm}^2 g_1^2
      \bigr]
    \id[d]{x}
    + \frac{\pi (d - 1)}{4 r_{\pm}} \int_{B_+}
      \eta_{\pm}^2 g_0^2
    \id[d]{x}
\nnexteq
  &\pickindent{=} \frac{\pi}{2} \int_{B_+}
      \beta_{\pm} \eta_{\pm}^2
      \bigl[
        (\nabla g_0)^2
        + g_1^2
      \bigr]
    \id[d]{x}
    + \frac{\pi(d - 1)}{4 r_{\pm}} \int_{B_+}
      \eta_{\pm}^2 g_0^2
    \id[d]{x}
  \eqbreakr
    + \frac{\pi}{2} \int_{B_+ \setminus B_-}
      \beta_{\pm} \eta_{\pm} (\nabla \eta_{\pm})
      \cdot (\nabla g_0^2)
    \id[d]{x}
    + \frac{\pi}{2} \int_{B_+ \setminus B_-}
      \beta_{\pm} (\nabla \eta_{\pm})^2
      g_0^2
    \id[d]{x}
  \eqend{,}
\end{align}
where we have restricted the integrals to $B_+$ and $B_+ \setminus B_-$, because $\supp \eta_{\pm} \subset B_+$ and $\supp \nabla \eta_{\pm} \subset B_+ \setminus B_-$, respectively.

\begin{theorem}
\label{thm:RelativeEntropy.Bounds.DoubleCones}
In the massless case, for any $f, \tilde{f} \in C_0^{\infty}(\Reals^d, \Reals) \oplus C_0^{\infty}(\Reals^d, \Reals)$, the relative entropy between the coherent excitations $\Omega = W(f) \hat{\Omega}$ and $\tilde{\Omega} = W(\tilde{f}) \hat{\Omega}$ on the double cone $D(B)$ with $B$ a ball of radius $r$ is
\begin{align}
\label{eq:DoubleCones.RelativeEntropy}
    H(\Omega, \tilde{\Omega})
  &= \frac{\pi}{2} \int_{B}
    \left(
      \beta \bigl[
        (\nabla g_0)^2
        + g_1^2
      \bigr]
      + \frac{d - 1}{2 r}
      g_0^2
    \right) \id[d]{x}
\end{align}
with $(g_0,g_1):=\tilde{f}-f$, which is approximated arbitrarily well by $H^+$ from above and $H^-$ from below.
\end{theorem}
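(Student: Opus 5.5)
We follow the proof of Theorem~\ref{thm:RelativeEntropy.Bounds.RightWedges}, now in radial coordinates. Take the radii symmetric around $r$, say $r_\pm \coloneq r\pm 2\varepsilon$ with $0<\varepsilon<r/4$. Choose radial cutoffs
\begin{align*}
\eta_\pm(\mathvec{x}) \coloneq \tilde{\eta}_\pm\left(\frac{r-\|\mathvec{x}\|}{\varepsilon}\pm 1\right),
\qquad \tilde{\eta}_\pm\in\Scal .
\end{align*}
Then $\eta_+\equiv1$ on $B$ and $\eta_+\equiv 0$ outside $B_+$, while $\eta_-\equiv 1$ on $B_-$ and $\eta_-\equiv 0$ outside $B$. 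This places us in the setting of \eqref{eq:RelativeEntropy.Bounds}, with the three regions $B_-\subset B\subset B_+$ in the roles of $O_1\subset O_2\subset O_3$. The cutoffs $\eta_\pm$ are smooth because they are constant near the origin.

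With this choice $H^\pm$ are given by \eqref{eq:DoubleCones.RelativeEntropy.Bounds}, and we treat the four terms separately as $\varepsilon\to0^+$.

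First, $\beta_\pm\to\beta$ uniformly on $B_+$ and $r_\pm\to r$. Moreover, $\eta_\pm^2\to\mathbf{1}_B$ pointwise except on the null set $\partial B$, and everything is dominated by a constant times $(\nabla g_0)^2+g_1^2+g_0^2$. By dominated convergence, the first two integrals therefore converge to \eqref{eq:DoubleCones.RelativeEntropy}.

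For the remaining two terms we switch to polar coordinates $\mathvec{x}=\rho\,\omega$ and set
\begin{align*}
\tau_0(\rho)\coloneq \int_{S^{d-1}} g_0(\rho\omega)^2\,\rho^{d-1}\,\mathrm{d}\omega ,
\end{align*}
which is smooth near $\rho=r$. Since $\nabla\eta_\pm$ is radial with magnitude $\varepsilon^{-1}|\tilde{\eta}_\pm'|$, the third term is bounded by
\begin{align*}
\sup|\beta_\pm|\cdot\sup|\partial_\rho g_0^2|\cdot \mathrm{vol}(B_+\setminus B_-)\cdot\varepsilon^{-1}\sup|\tilde{\eta}_\pm'| = O(\varepsilon)\,\varepsilon^{-1}\cdot O(\varepsilon)\to 0 .
\end{align*}
Here we use that $\beta_\pm=O(\varepsilon)$ on the shell, because $|r_\pm^2-\rho^2|\le C\varepsilon$ there. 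For the fourth term, substitute $\rho=r-\varepsilon x$ with $x\in[-2,2]$. One finds
\begin{align*}
\beta_\pm = \frac{(r_\pm-\rho)(r_\pm+\rho)}{2r_\pm} = \varepsilon(x\pm2)\,\frac{r_\pm+\rho}{2r_\pm},
\end{align*}
and the prefactor tends to $1$ uniformly. The fourth term is therefore
\begin{align*}
\frac{\pi}{2}\int_{-2}^{2}(x\pm2)\,\tilde{\eta}_\pm'(x\pm1)^2\,\frac{r_\pm+\rho}{2r_\pm}\,\tau_0(r-\varepsilon x)\,\mathrm{d}x ,
\end{align*}
because the factor $\varepsilon$ from $\beta_\pm$, the factor $\varepsilon^{-2}$ from $(\nabla\eta_\pm)^2$ and the factor $\varepsilon$ from $\mathrm{d}\rho$ cancel. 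By dominated convergence this tends to
\begin{align*}
\pm\frac{\pi}{2}\,\tau_0(r)\int_{-1}^{1}(y+1)\,\tilde{\eta}_\pm'(\pm y)^2\,\mathrm{d}y .
\end{align*}
By Lemma~\ref{lma:CutoffFunction.Integral} and the remark after it, this can be made arbitrarily small by choosing $\tilde{\eta}_\pm$ suitably.

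Hence $\limsup_{\varepsilon\to0^+}(H^+-H^-)$ can be made as small as we like. Since $H^-\le H(\Omega,\tilde{\Omega})\le H^+$ holds for every $\varepsilon$ and every choice of $\tilde{\eta}_\pm$, the limits of both bounds pin down \eqref{eq:DoubleCones.RelativeEntropy}.

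Two technical points remain, and the second is the main obstacle.

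\begin{itemize}
\item The domain hypothesis needed for \eqref{eq:RelativeEntropy.Bounds} must hold. The vectors $h_\pm=\eta_\pm g$ are smooth and compactly supported. So are $\eta_+g$ inside $B_+$ and $\eta_-g$ inside $B$, which lies inside $B_+$. Hence the formula \eqref{eqn:RMHDoubleCone}, together with its use in \eqref{eqn:REDoubleCone}, applies to them. Note that $h_-$ is supported in $\overline{B}$ but must be paired with $k^{(1)}$ for $B_-$. Here we rely on the identification of $H^\pm$ with \eqref{eq:DoubleCones.RelativeEntropy.Bounds} as already stated above the theorem.

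\item The bookkeeping of the $O(\varepsilon)$ behaviour of $\beta_\pm$ on the shell is the step that needs care. The convergence of the constant in front of the lemma integral, namely $(r_\pm+\rho)/(2r_\pm)\to1$, is what makes the boundary term reduce exactly to the functional $E$ of Lemma~\ref{lma:CutoffFunction.Integral}.
\end{itemize}
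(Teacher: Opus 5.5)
Your proposal is correct and follows the paper's proof essentially step by step. You use the same symmetric radii $r_\pm=r\pm2\varepsilon$ and radial cutoffs $\tilde{\eta}_\pm\bigl((r-\|\mathvec{x}\|)/\varepsilon\pm1\bigr)$, dominated convergence for the bulk terms, and the substitution $\rho=r-\varepsilon x$ that reduces the gradient-squared boundary term to the functional of Lemma~\ref{lma:CutoffFunction.Integral}. The remaining differences are cosmetic: you absorb the Jacobian $\rho^{d-1}$ into $\tau_0$, you dispose of the cross term with a direct $O(\varepsilon)$ sup-bound instead of a substitution, and you note explicitly that $\eta_\pm$ is smooth at the origin because $r_->0$, which the paper leaves implicit.
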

\proof
Similar to the Rindler wedges in Section~\ref{sec:RelativeEntropy.EnergyDensity.RightWedges}, we approximate the middle region $B$ symmetrically by balls $B_{\pm}$ with radii $r_{\pm} = r \pm 2 \varepsilon$ for $0 < \varepsilon < \frac{r}{2}$ and use cutoff functions that are spherically symmetric,
\begin{align}
\label{eq:DoubleCones.RelativeEntropy.BoundsSymmetric.Cutoffs}
    \eta_{\pm}(\mathvec{x})
  &\coloneq \tilde{\eta}_{\pm}\left( \frac{r - \|\mathvec{x}\|}{\varepsilon} \pm 1 \right)
  \eqend{.}
\end{align}
We integrate first over the unit sphere $S^{d-1}$ 
\begin{align*}
    \tau_0\bigl( \|\mathvec{x}\| \bigr)
  &\coloneq \int_{S^{d - 1}}
      g_0^2
    \id\vol_{S^{d - 1}}
  \eqend{,}
\end{align*}
which is a function in $C_0^{\infty}([0,\infty), \Reals)$.
We substitute $s = (r-\|\mathvec{x}\|) / \varepsilon$ in the second and third integrals of $H^{\pm}$ and find
\begin{align*}
   H^{\pm}
  &\pickindent{=} \frac{\pi}{2} \int_{B}
    \eta_{\pm}^2
      \left(
        \beta_{\pm}
        \bigl[
          (\nabla g_0)^2
          + g_1^2
        \bigr]
        + \frac{d - 1}{2 r_{\pm}}
        g_0^2
      \right)
      \id[d]{x}
  \eqbreakr
    + \frac{\pi\varepsilon}{2} \int_{-2}^{2}
      \frac{r_{\pm}^2 - (r - \varepsilon s)^2}{2 r_{\pm}\varepsilon}
      \tilde{\eta}_{\pm}(s \pm 1)
      \tilde{\eta}_{\pm}'(s \pm 1)
      \tau_0'(r - \varepsilon s)
      (r - \varepsilon s)^{d - 1}
    \id{s}
  \eqbreakr
    + \frac{\pi}{2} \int_{-2}^{2}
      \frac{r_{\pm}^2 - (r - \varepsilon s)^2}{2 r_{\pm}\varepsilon}
      \tilde{\eta}_{\pm}'(s \pm 1)^2
      \tau_0(r - \varepsilon s)
      (r - \varepsilon s)^{d - 1}
    \id{s}
  \eqend{.}
\end{align*}
Again, by the Dominated Convergence Theorem the first integral converges to \eqref{eq:DoubleCones.RelativeEntropy} in the limit 
$\varepsilon \to 0^+$, because a dominating function is readily found and $\beta_{\pm} \to \beta$ and $\eta_{\pm}(\mathvec{x}) \to \theta\bigl( r - \|\mathvec{x}\| \bigr)$ converge pointwise. 
Similarly, the second term converges to $0$ and the third term to
\begin{align*}
    \frac{\pi}{2} \tau_0(r) r^{d - 1}
    \int_{-2}^{2}
      (s \pm 2)
      \tilde{\eta}_{\pm}'(s \pm 1)^2
    \id{s}
  &= \pm \frac{\pi}{2} \tau_0(r) r^{d - 1}
    \int_{-1}^{1}
      (y + 1) \tilde{\eta}_{\pm}'(\pm y)^2
    \id{y}
  \eqend{,}
\end{align*}
where we used the fact that
$\frac{r_{\pm}^2 - (r - \varepsilon s)^2}{2 r_{\pm}\varepsilon}\to (s \pm 2)$ pointwise as $\varepsilon\to0$, the substitution $\pm y = s \pm 1$ and the support properties of $\tilde{\eta}_{\pm}$. This integral is the same as in the proof of Theorem \ref{thm:RelativeEntropy.Bounds.RightWedges}, so it can be made arbitrarily small.
\qed

\section{Conclusions and discussion}\label{sec:conclusions}

Within the scope of this paper, we have presented a scheme for general AQFTs that allows us to estimate relative modular Hamiltonians, and hence also relative entropies, between suitable pairs of states from above, resp.~below, in terms of modular Hamiltonians of a reference state on a larger, resp.~smaller, region. This is the content of our main result, Theorem \ref{Prop:GeneralEstimate}.
We have argued that in a typical QFT setting, where local algebras are of type III, the class of pairs of states that are susceptible to our scheme is quite large, due to a result of Connes and St{\o}rmer \cite{Connes1978}. It remains an open question whether all pairs of quasi-equivalent, standard states can be handled in this way. We expect that this question can only be addressed by adapting the methods of \cite{Connes1978}.

An interesting aspect of our estimates arises when one of the states of interest is the reference state. In that case, the estimates apply only if the transition between the states can be accomplished by a unitary operation that does not allow 
superluminal signalling in the context of Sorkin's impossible measurements. It is unclear whether the converse of this implication is also valid, i.e., whether every non-signalling unitary $u$ and every vector $\Omega$ that is standard for the algebras involved, the pair $\Omega$ and $u\Omega$ is susceptible to our estimates. For type I factors, we have shown in Counterexample~\ref{ex:nonproductsignallingunitary} that this converse implication fails. However, the arguments used there do not extend to the type III case.

Let us briefly point out that our general estimates can be combined very nicely with spacetime deformation arguments as in \cite{KS2009}.
I.e., we do not need to insist that $V_1\subset V_2\subset V_3$, as long as $V_1\subset D(V_2)$ and $D(V_3)\supset V_2$ and the theory satisfies the time-slice axiom. This could lead to Bekenstein-type bounds on relative entropies, if one starts e.g.~with a Minkowski vacuum, whose relative modular Hamiltonian on a double cone is related to the energy density, and one deforms the theory and the states into the future.

In the context of coherent states on CCR algebras, our general estimates simplify and for the particular case of free scalar fields all pairs of coherent excitations of the vacuum are susceptible to our estimates. In Section \ref{sec:coherentscalarQFT} we combined our estimates with a squeezing argument to recover a known formula for the relative entropy on a wedge or a double cone (in the massless case), cf.~\cite{CasiniGrilloPontello:2019,Ciolli2020,LongoMorsella2023,Bostelmann2022}. This implies in particular that the relative entropy can be approximated arbitrarily well by our estimates, suggesting that they might not lose information also in the general case. Analogous limiting arguments for the relative modular Hamiltonians are more difficult to formulate, due to the unboundedness of the operators involved, which complicate the notion of convergence. One possibility would be to investigate convergence of the estimates in \eqref{eq:RelativeHamiltonians.Bounds} in the graph norm, but we have not pursued this here.

A natural question to ask is whether it is possible to extend our estimates beyond the case of coherent states on CCR algebras. For pairs of general quasi-free states \(\omega,\tilde{\omega}\) the situation is already more complicated. If the states are quasi-equivalent and standard for the region \(V_2\), as in Section~\ref{ssec:applicationinQFT}, then we may implement both states by standard vectors \(\Omega\), \(\tilde{\Omega}\) in the same Hilbert space and there is a Bogolyubov transformation \(b\), implemented by a unitary \(u_b\) satisfying \(\tilde{\Omega}=u_b\Omega\). Bogolyubov unitaries act on Weyl operators (generating local algebras) as \(u_b W(f) u_b^*=W(bf)\), say for \(f\) supported in some bounded region, so they will not preserve generic local algebras and may in principle allow superluminal signalling. It seems difficult to determine in general whether the pair of states is susceptible to our estimates.
(See also
\cite{lashkari2021modular} for results on states with a single squeezed mode.)

\bigskip

\textbf{Acknowledgements}

A.C.\@ and K.S.\@ would like to thank the organisers and the participants of the Workshop \enquote{When Quantum Field Theory meets Quantum Information} at Universidad Carlos III de Madrid (Leganés, Spain) in December 2025, where initial results were presented and their relation to Sorkin's paradox first came up.

K.S.\@ gratefully acknowledges the support by the Heisenberg Programme of the Deutsche Forschungsgemeinschaft (DFG) through the project \enquote{Mathematical Features of the Stress Tensor in Quantum Field Theory in Curved Spacetime}
(SA 3220/1-1).
C.M.\@ would like to acknowledge the Riemann Center for Geometry and Physics at the Leibniz University Hannover for the award of a Riemann Fellowship to conduct this research.
A.C.\@ is affiliated with GNFM–INDAM (Istituto Nazionale di Alta Matematica).

\medskip

\textbf{Declarations}

\emph{Data availability statement:}
The authors declare that no data sets were used or created for this research.

\emph{Competing interests:} The authors have no competing interests to declare.

\printbibliography

@article{Araki1975,
title={Relative Entropy of States of von Neumann Algebras},
author={Araki, H.},
journal={Publ. Res. Inst. Math. Sci.},
volume={11},
number={3},
pages={809-833},
year={1975}
}

@article{Araki1977,
title={Relative Entropy for States of von Neumann Algebras II},
author={Araki, H.},
journal={Publ. Res. Inst. Math. Sci.},
volume={13},
number={1},
pages={173-192},
year={1977}
}

@article{Connes1978,
title = {Homogeneity of the state space of factors of type III$_1$},
journal = {Journal of Functional Analysis},
volume = {28},
number = {2},
pages = {187-196},
year = {1978},
%issn = {0022-1236},
%doi = {https://doi.org/10.1016/0022-1236(78)90085-X},
%url = {https://www.sciencedirect.com/science/article/pii/002212367890085X},
author = {Connes, A. and Størmer, E.},
}

@book{haag2012local,
	title={Local quantum physics: Fields, particles, algebras},
	author={Haag, R.},
	year={2012},
	publisher={Springer Science \& Business Media}
}

@book{KadisonRingroseI,
  title={Fundamentals of the Theory of Operator Algebras. Volume I},
  author={Kadison, R. V. and Ringrose, J. R.},
  year={1997},
  publisher={American Mathematical Society}
}

@book{KadisonRingroseII,
	title={Fundamentals of the Theory of Operator Algebras. Volume II},
	author={Kadison, R. V. and Ringrose, J. R.},
	year={1997},
	publisher={American Mathematical Society}
}

@book{TakesakiI,
	title={Theory of Operator Algebras I},
	author={Takesaki, M.},
	year={1979},
	publisher={Springer}
}

@book{TakesakiII,
	title={Theory of Operator Algebras II},
	author={Takesaki, M.},
	year={2003},
	publisher={Springer}
}

@book{HollandsKS2018,
	title={Entanglement Measures and Their Properties in Quantum Field Theory},
	author={Hollands, S. and Sanders, K.},
	year={2018},
	publisher={Springer},
    %doi={10.1007/978-3-319-94902-4}
}

@book{ReedSimonII,
	title={Methods of Modern Mathematical Physics II},
    author={Reed, M. and Simon, B.},
	year={1975},
	publisher={Academic Press},
}

@article{HislopLongo1982,
	title={Modular structure of the local algebras associated with the free massless scalar field theory},
	author={Hislop, P. D. and Longo, R.},
	journal={Communications in Mathematical Physics},
	volume={84},
    number={1},
    pages={71-85},
	year={1982}
}

@article{LongoMorsella2023,
	title={The Massless Modular Hamiltonian},
	author={Longo, R. and Morsella, G.},
	journal={Communications in Mathematical Physics},
	volume={400},
	number={2},
	pages={1181-1201},
	year={2023}
}

@article{BisognanoWichmann:1975,
    author = "Bisognano, J. J. and Wichmann, E. H.",
    title = "{On the duality condition for a Hermitian scalar field}",
    %doi = "10.1063/1.522605",
    journal = {Journal of Mathematical Physics},
    volume = "16",
    number = "4",
    pages = "985--1007",
    year = "1975",
    publisher = "American Institute of Physics"
}

@article{BisognanoWichmann1976,
    author = {Bisognano, J. J. and Wichmann, E. H.},
    title = {On the duality condition for quantum fields},
    journal = {Journal of Mathematical Physics},
    volume = {17},
    number = {3},
    pages = {303-321},
    year = {1976}
}

@article{araki1982quasi,
  title={On quasi-equivalence of quasifree states of the canonical commutation relations},
  author={Araki, H. and Yamagami, S.},
  journal={Publications of the Research Institute for Mathematical Sciences},
  volume={18},
  number={2},
  pages={703-758},
  year={1982},
  publisher={Research Institute for Mathematical Sciences}
}

@article{longo2025bekenstein,
  title={A Bekenstein-type bound in QFT},
  author={Longo, R.},
  journal={Communications in Mathematical Physics},
  volume={406},
  number={5},
  pages={95},
  year={2025},
  publisher={Springer}
}

@article{hollands2025bekenstein,
author = {Hollands, S. and Longo, R.},
year = {2025},
%month = {05},
pages = {},
title = {Bekenstein Bound for Approximately Local Charged States},
journal = {Reviews in Mathematical Physics},
%doi = {10.1142/S0129055X24610087}
}

@article{Bekenstein1981,
  title = {Universal upper bound on the entropy-to-energy ratio for bounded systems},
  author = {Bekenstein, J. D.},
  journal = {Phys. Rev. D},
  volume = {23},
  pages = {287-298},
  numpages = {0},
  year = {1981},
  publisher = {American Physical Society}
}

@article{Witten:RevModPhys.90.045003,
  title = {APS Medal for Exceptional Achievement in Research: Invited article on entanglement properties of quantum field theory},
  author = {Witten, E.},
  journal = {Rev. Mod. Phys.},
  volume = {90},
  %issue = {4},
  pages = {045003},
  numpages = {38},
  year = {2018},
  publisher = {American Physical Society},
  %doi = {10.1103/RevModPhys.90.045003},
  %url = {https://link.aps.org/doi/10.1103/RevModPhys.90.045003}
% month = {Oct}
}

@incollection{witten2022does,
  title={Why does quantum field theory in curved spacetime make sense? And what happens to the algebra of observables in the thermodynamic limit?},
  author={Witten, E.},
  booktitle={Dialogues Between Physics and Mathematics: CN Yang at 100},
  pages={241-284},
  year={2022},
  publisher={Springer}
}

@inproceedings{longo2008real,
  title={Real Hilbert subspaces, modular theory, \(SL (2, \mathbb{R})\) and CFT},
  author={Longo, R.},
  booktitle={Von Neumann algebas in Sibiu: Conference Proceedings},
  pages={33--91},
  year={2008}
}

@article{BaezFritz2014,
    AUTHOR = {Baez, J. C. and Fritz, T.},
     TITLE = {A {B}ayesian characterization of relative entropy},
   JOURNAL = {Theory Appl. Categ.},
  FJOURNAL = {Theory and Applications of Categories},
    VOLUME = {29},
      YEAR = {2014},
     PAGES = {No. 16, 422--457},
      %ISSN = {1201-561X},
   %MRCLASS = {94A17 (18B99 62F15)},
%  MRNUMBER = {3246591},
%MRREVIEWER = {Samuel\ Cheng},
}

@article{SEWELL1982201,
title = {Quantum fields on manifolds: PCT and gravitationally induced thermal states},
journal = {Annals of Physics},
volume = {141},
number = {2},
pages = {201-224},
year = {1982},
%issn = {0003-4916},
%doi = {https://doi.org/10.1016/0003-4916(82)90285-8},
%url = {https://www.sciencedirect.com/science/article/pii/0003491682902858},
author = {G. L. Sewell},
}

@article{PuszWoronowicz,
	author = "W. Pusz and S. L. Woronowicz",
	journal = "Communications in Mathematical Physics",
	number = "",
	pages = "273--290",
	publisher = "Springer",
	title = "Passive states and KMS states for general quantum systems.",
	volume = "58",
	year = "1978"
}

@article{ReehSchlieder:1961,
    author = {Reeh, H. and Schlieder, S.},
    title = {{Bemerkungen zur Unit{\"a}r{\"a}quivalenz von Lorentzinvarianten Feldern}},
    %doi = "10.1007/BF02787889",
    journal = "Il Nuovo Cim.",
    volume = "22",
    number = "5",
    pages = "1051--1068",
    year = "1961"
}

@article{Longo:2019,
    author = "Longo, R.",
    title = "{Entropy of coherent excitations}",
    %eprint = "1901.02366",
    %archivePrefix = "arXiv",
    %primaryClass = "math-ph",
    %doi = "10.1007/s11005-019-01196-6",
    journal = "Lett. Math. Phys.",
    volume = "109",
    number = "12",
    pages = "2587--2600",
    year = "2019"
}

@article{CasiniGrilloPontello:2019,
    author = "Casini, H. and Grillo, S. and Pontello, D.",
    title = "{Relative entropy for coherent states from Araki formula}",
    %eprint = "1903.00109",
    %archivePrefix = "arXiv",
    %primaryClass = "hep-th",
    %doi = "10.1103/PhysRevD.99.125020",
    journal = "Phys. Rev. D",
    volume = "99",
    number = "12",
    pages = "125020",
    year = "2019"
}

@article{Ciolli2020,
    author = "Ciolli, F. and Longo, R. and Ruzzi, G.",
    title = "{The Information in a Wave}",
    %doi = "10.1007/s00220-019-03593-3",
    journal = "Communications in Mathematical Physics",
    volume = "379",
    number = "3",
    pages = "979--1000",
    year = "2020"
}

@article{Bostelmann2022,
 author = {Bostelmann, H. and Cadamuro, D. and Del Vecchio, S.},
 journal = {Communications in Mathematical Physics},
 volume = {389},
 number = {1},
 pages = {661--691},
 %doi = {10.1007/s00220-021-04249-x},
 title = {Relative Entropy of Coherent States on General CCR Algebras},
 year = {2022}
}

@article{FiglioliniGuido:1989,
    author = "Figliolini, F. and Guido, D.",
    title = "{The Tomita Operator for the free scalar field}",
    journal = "Ann. Inst. H. Poincar\'e",
    volume = "51",
    number = "4",
    pages = "419--435",
    year = "1989",
    publisher = "JSTOR"
}

@article{Kosaki1986,
 %ISSN = {03794024, 18417744},
 %URL = {http://www.jstor.org/stable/24714805},
 author = {Kosaki, H.},
 journal = {Journal of Operator Theory},
 number = {2},
 pages = {335--348},
 publisher = {Theta Foundation},
 title = {Relative entropy of states: a variational expression},
 volume = {16},
 year = {1986}
}

@book{brattelirobinson1997,
	author = {Bratteli, O. and Robinson, D. W. },
	title = { Operator Algebras and Quantum Statistical Mechanics 2: Equilibrium States. Models in Quantum Statistical Mechanics},
	edition = { 2nd ed. },
	%isbn = { 3540614435 },
	publisher = { Springer Berlin ; London },
	pages = { xiii, 517 pages ; },
	year = { 1997 },
	type = { Book },
	language = { English }
}

@book{Simon2019MatrixMonotone,
	title={Loewner's Theorem on Monotone Matrix Functions},
	author={Simon, B.},
	year={2019},
	publisher={Springer Cham}
}

@article{jubb2022causal,
  title={Causal state updates in real scalar quantum field theory},
  author={Jubb, I.},
  journal={Physical Review D},
  volume={105},
  number={2},
  pages={025003},
  year={2022},
  publisher={APS}
}

@article{borsten2021impossible,
  title={Impossible measurements revisited},
  author={Borsten, L. and Jubb, I. and Kells, G.},
  journal={Physical Review D},
  volume={104},
  number={2},
  pages={025012},
  year={2021},
  publisher={APS}
}

@inproceedings{Sorkin:1993,
  author = "Sorkin, R. D.",
  title = "{Impossible measurements on quantum fields}",
  booktitle = "{Directions in General Relativity: An International Symposium in Honor of the 60th Birthdays of Dieter Brill and Charles Misner}",
  %eprint = "gr-qc/9302018",
  %archivePrefix = "arXiv",
  %reportNumber = "SU-GP-92-12-4A",
  month = "2",
  year = "1993",
  pages = "293–-305",
  publisher = "Cambridge University Press"
}

@article{lashkari2021modular,
  title={Modular flow of excited states},
  author={Lashkari, N. and Liu, H. and Rajagopal, S.},
  journal={Journal of High Energy Physics},
  volume={2021},
  number={9},
  pages={1--50},
  year={2021},
  publisher={Springer}
}

@article{papageorgiou2024eliminating,
  title={Eliminating the ‘impossible’: Recent progress on local measurement theory for quantum field theory},
  author={Papageorgiou, M. and Fraser, D.},
  journal={Foundations of Physics},
  volume={54},
  number={3},
  pages={26},
  year={2024},
  publisher={Springer}
}

@article{buchholz1978,
author = {Buchholz, D.},
year = {1978},
month = {01},
pages = {146-153},
title = {On the structure of local quantum fields with non-trivial interaction},
volume = {Proceedings, Leipzig 1977},
journal = {Operator Algebras, Ideals and Their Applications in Theoretical Physics}
}

@article{la2025fermionic,
  title={The fermionic massless modular hamiltonian},
  author={La Piana, F. and Morsella, G.},
  journal={Communications in Mathematical Physics},
  volume={406},
  number={4},
  pages={81},
  year={2025},
  publisher={Springer}
}

@article{sanders2015construction,
  title={On the construction of Hartle--Hawking--Israel states across a static bifurcate Killing horizon},
  author={Sanders, K.},
  journal={Letters in Mathematical Physics},
  volume={105},
  number={4},
  pages={575--640},
  year={2015},
  publisher={Springer}
}

@article{KS2009,
  title={On the Reeh-Schlieder property in curved spacetime},
  author={Sanders, K.},
  journal={Communications in Mathematical Physics},
  volume={288},
  number={1},
  pages={271--285},
  year={2009},
  publisher={Springer}
}

\end{document}